\documentclass[11pt]{article}
\usepackage[utf8]{inputenc}
\usepackage{amsmath}
\usepackage{amsfonts}
\usepackage{amssymb}
\usepackage{amsthm}
\usepackage[unicode=true,pdfusetitle,
 bookmarks=true,bookmarksnumbered=false,bookmarksopen=false,
 breaklinks=false,pdfborder={0 0 1},backref=page,colorlinks=true]
 {hyperref}
\usepackage{color}
\usepackage{graphicx}
\usepackage{subcaption}
\usepackage[dvipsnames,svgnames,x11names,hyperref]{xcolor}
\usepackage[round]{natbib}
\usepackage[margin=1in]{geometry}
\bibliographystyle{chicago}
\newtheorem{theorem}{Theorem}
\newtheorem{proposition}{Proposition}
\newtheorem{lemma}{Lemma}
\usepackage{dsfont}
\hypersetup{linkcolor=RoyalBlue,citecolor=RoyalBlue,urlcolor=RoyalBlue}
\usepackage{tikz}
\usepackage{caption}
\usepackage{float}
\usetikzlibrary{arrows.meta}
\usetikzlibrary{calc}

\usepackage{comment}
\def\rmd{\mathrm{d}}
\usepackage{times}
\def\T{{ \mathrm{\scriptscriptstyle T} }}
\def\rmd{\mathrm{d}}

\begin{document}
\title{Simulating Diffusion Bridges with Score Matching}
\author{Jeremy Heng\thanks{ESSEC Business School, Singapore 139408, Singapore} \footnote{Corresponding author: heng@essec.edu}
\and Valentin De Bortoli\thanks{Center for Science of Data, ENS Ulm, Paris 75005, France} 
\and Arnaud Doucet\thanks{Department of Statistics, University of Oxford, Oxford OX1 3LB, U.K.} 
\and James Thornton\footnotemark[4]}
\maketitle

\begin{abstract}
We consider the problem of simulating diffusion bridges, which are diffusion processes that are conditioned to initialize and terminate at two given states. The simulation of diffusion bridges has applications in diverse scientific fields and plays a crucial role in the statistical inference of discretely-observed diffusions. This is known to be a challenging problem that has received much attention in the last two decades. This article contributes to this rich body of literature by presenting a new avenue to obtain diffusion bridge approximations. Our approach is based on a backward time representation of a diffusion bridge, which may be simulated if one can time-reverse the unconditioned diffusion. We introduce a variational formulation to learn this time-reversal with function approximation and rely on a score matching method to circumvent intractability. Another iteration of our proposed methodology approximates the Doob's $h$-transform defining the forward time representation of a diffusion bridge. We discuss algorithmic considerations and extensions, and present numerical results on an Ornstein--Uhlenbeck process, a model from financial econometrics for interest rates, and a model from genetics for cell differentiation and development to illustrate the effectiveness of our approach. 
\end{abstract}
\textbf{\small{}Keywords}{\small{}: Diffusion; Diffusion bridge; Score matching; Stochastic differential equation; Time-reversal. }{\small \par}

\section{Introduction}
Diffusion processes have been used extensively in mathematical and natural sciences. 
A diffusion process $X=(X_t)_{t\in[0,T]}$ in $\mathbb{R}^d$ is defined by the stochastic differential equation 
\begin{align}\label{eqn:SDE}
    \rmd X_t = f(t,X_t)\rmd t + \sigma(t,X_t)\rmd W_t,
\end{align}
where $f:[0,T]\times\mathbb{R}^d\rightarrow\mathbb{R}^d$ is a drift function, $\sigma:[0,T]\times\mathbb{R}^d\rightarrow\mathbb{R}^{d\times d}$ is a diffusion coefficient, and $W=(W_t)_{t\in[0,T]}$ is a $d$-dimensional Brownian motion. 
We suppose $f$ and $\sigma$ are sufficiently regular to induce a unique weak solution and $\Sigma(t,x_t)=(\sigma\sigma^\T)(t,x_t)$ is uniformly positive definite for all $(t,x_t)\in[0,T]\times\mathbb{R}$. For any $0\leq s<t\leq T$, we denote the transition density of \eqref{eqn:SDE} with respect to the Lebesgue measure on $\mathbb{R}^d$ as $p(t,x_t\mid s,x_s)$ and assume that it is positive for ease of exposition. While the numerical simulation of $X$ can be routinely handled by time-discretization schemes \citep{kloeden1992}, the task of simulating $X$ initialized at $X_0=x_0$ and conditioned to terminate at $X_T=x_T$ is a challenging problem that has received much attention in the last two decades. 

Simulating the conditioned process $X^{\star}=(X_t^{\star})_{t\in[0,T]}$, commonly referred to as a diffusion bridge, has applications in diverse fields such as computational chemistry \citep{bolhuis2002transition,wang2020exact}, financial econometrics \citep{elerian2001likelihood,durham2002numerical}, genetics \citep{wang2011quantifying}, and shape analysis \citep{arnaudon2020diffusion}. When performing statistical inference for parameters of $f$ and $\sigma$ in the case where $X$ is observed at discrete time points, diffusion bridge simulation is a crucial tool that allows one to impute missing paths between observations within an expectation-maximization algorithm or a Gibbs sampler \citep{pedersen1995consistency,roberts2001inference,eraker2001mcmc,beskos2006exact,golightly2008bayesian,van2017bayesian}. 

By Doob's $h$-transform \citep[p. 83]{rogers2000diffusions}, it is well-known that $X^{\star}$ satisfies
\begin{align}\label{eqn:doob_SDE}
    \rmd X_t^{\star} = \{f(t,X_t^{\star}) + \Sigma(t,X_t^{\star})\nabla \log h(t,X_t^{\star})\}\rmd t + \sigma(t,X_t^{\star})\rmd W_t, \quad X_0^{\star}=x_0,
\end{align}
where $h(t,x_t) = p(T,x_T\mid t,x_t)$ and $\nabla$ denotes the gradient operator. The term $\Sigma(t,x_t)\nabla \log h(t,x_t)$ forces the conditioned process towards the terminal condition $X_T^{\star}=x_T$. As the transition density and hence its logarithmic gradient is intractable for most diffusions, exploiting this result to simulate diffusion bridges is highly non-trivial. To this end, one can characterize $h$ as the solution of the backward Kolmogorov equation 
\begin{align}\label{eqn:backward_kolmogorov}
	\partial_t h(t,x_t) + (\mathcal{L}h)(t,x_t) = 0,
\end{align}
with terminal condition at time $T$ given by the Dirac measure at $x_T$, where $\mathcal{L}$ denotes the generator of $X$ \citep{stroock1997multidimensional}. Equation \eqref{eqn:backward_kolmogorov} reveals that $h$ propagates information about the terminal constraint backwards in time. 
However, numerical resolution of this partial differential equation is particularly challenging due to the singularity at time $T$, and computationally demanding when the dimension $d$ is large \citep{wang2020exact}. Furthermore, one must run a solver for every pair of conditioned states $(x_0,x_T)$ considered. 

A common approach to address these difficulties is to simulate a proposal bridge process $X^{\circ}=(X_t^{\circ})_{t\in[0,T]}$, satisfying $\rmd X_t^{\circ} = f^{\circ}(t,X_t^{\circ}) \rmd t + \sigma(t,X_t^{\circ})\rmd W_t$ with $X_0^{\circ}=x_0$. 
One constructs $f^{\circ}:[0,T]\times\mathbb{R}^d\rightarrow\mathbb{R}^d$ using a tractable approximation of \eqref{eqn:doob_SDE}, and corrects for the discrepancy using importance sampling or an independent Metropolis--Hastings algorithm \citep{papaspiliopoulos2012importance,elerian2001likelihood}. 
The simple choice $f^{\circ}(t,x_t)=f(t,x_t)$ typically performs poorly as it does not take the constraint $X_T=x_T$ into account \citep{pedersen1995consistency}. 
The drift of a Brownian bridge $f^{\circ}(t,x_t) = (x_T-x_t)/(T-t)$ has been considered in several works \citep{durham2002numerical,delyon2006simulation,stramer2007simulated,papaspiliopoulos2013data}, and improved by \citet{whitaker2017improved} using an innovative decomposition of the process into deterministic and stochastic parts.
\citet{clark1990simulation} followed by \citet{delyon2006simulation} studied the choice 
$f^{\circ}(t,x_t) = f(t,x_t) + (x_T-x_t)/(T-t)$ that incorporates the dynamics of the original process $X$. 
To introduce more flexibility and better mimic the structure of \eqref{eqn:doob_SDE}, \citet{schauer2017guided} proposed setting $f^{\circ}(t,x_t) = f(t,x_t) + \Sigma(t,x_t)\nabla\log \tilde{h}(t,x_t)$, where $\tilde{h}(t,x_t) = \tilde{p}(T,x_T \mid t,x_t)$ is an analytically tractable transition density of an auxiliary process. For tractability, the latter is typically chosen from the class of linear processes and can be optimized to get the best approximation within this class \citep{van2017bayesian}.
Other Markov chain Monte Carlo approaches include Gibbs sampling \citep{eraker2001mcmc}, Langevin-type stochastic partial differential equations \citep{stuart2004conditional,beskos2008mcmc}, piecewise deterministic Monte Carlo \citep{bierkens2021piecewise}, and manifold Hamiltonian Monte Carlo methods \citep{graham2022manifold}.

The exact simulation algorithms developed in \citet{beskos2005exact} and \citet{beskos2006retrospective} can be employed to sample diffusion bridges without any time-discretization error. However, these elegant methods are limited to the class of diffusion processes that can be transformed to have unit diffusion coefficient. 
\citet{bladt2014simple} and \citet{bladt2016simulation} devised an ingenious methodology to simulate diffusion bridges based on coupling and time-reversal of diffusions. Their proposed method is applicable to the class of ergodic diffusions with an invariant density that is either explicitly known or numerically approximated.
Closely related approaches include sequential Monte Carlo algorithms that resample using backward information filter approximations \citep{guarniero2017iterated}, information from backward pilot paths \citep{lin2010generating}, or guided weight functions \citep{del2015sequential}. The main idea underlying these works is the representation of the diffusion bridge in \eqref{eqn:doob_SDE} and \eqref{eqn:backward_kolmogorov}.

\section{Diffusion bridges}

\subsection{Time-reversed bridge process}\label{sec:time_reversed_bridge}
It can be shown that the time-reversed bridge process $Z^{\star}=(Z_t^{\star})_{t\in[0,T]}=(X_{T-t}^{\star})_{t\in[0,T]}$ satisfies
\begin{align}\label{eqn:reversal_bridge}
    \rmd Z_t^{\star} = b(t,Z_t^{\star})\rmd t + \sigma(T-t,Z_t^{\star}) \rmd B_t,\quad Z_0^{\star} = x_T,
\end{align}
with drift function $b(t,z_t) = -f(T-t,z_t) + \Sigma(T-t,z_t)s(T-t,z_t) + \nabla \cdot\Sigma(T-t,z_t)$, another standard Brownian motion $B=(B_t)_{t\in[0,T]}$, $s(t,x_t) = s^{\star}(t,x_t) - \nabla \log h(t,x_t)$, and $\nabla\cdot\Sigma(t,x)=(\sum_{j=1}^d\partial_{x_j}\Sigma^{1,j}(t,x), \ldots,\sum_{j=1}^d\partial_{x_j}\Sigma^{d,j}(t,x))$ is the divergence of $\Sigma$. 
Here $s^{\star}(t,x_t)=\nabla \log p^{\star}(t,x_t)$ denotes the score of the marginal density $p^{\star}(t,x_t)$ of the diffusion bridge process $X_t^{\star}$ at time $t\in(0,T)$. 
We refer readers to \citet{haussmann1986time}, \citet{millet1989integration}, and \citet{cattiaux2023time} for conditions under which the representation in \eqref{eqn:reversal_bridge} holds. By the Markov property, we have the relation
\begin{align}\label{eqn:marginal_bridge}
    p^{\star}(t,x_t) = p\{t,x_t\mid (0,x_0),(T,x_T)\} = \frac{p(t,x_t\mid 0,x_0)h(t,x_t)}{p(T,x_T\mid 0,x_0)},
\end{align}
as $h(t,x_t) = p(T,x_T\mid t,x_t)$. This implies that $s(t,x_t) = \nabla \log p(t,x_t\mid 0,x_0)$ is simply the score of the transition density of $X$.

Exploiting this backward time representation to derive diffusion bridge approximations is also highly non-trivial due to the intractability of the transition density $\eta(t,x_t)=p(t,x_t\mid 0,x_0)$, which is now characterized by the forward Kolmogorov equation $\partial_t\eta(t,x_t) + (\mathcal{F}\eta)(t,x_t)=0$, 
with initial condition at time $0$ given by the Dirac measure at $x_0$
and $\mathcal{F}$ denotes the Fokker--Planck operator of $X$ \citep{stroock1997multidimensional}. 
Numerical resolution of $\eta$ using partial differential equation solvers also suffers from the same difficulties as \eqref{eqn:backward_kolmogorov}. 
A key observation is that \eqref{eqn:reversal_bridge} can be understood as first setting $Z_0^{\star}=x_T$ to satisfy the terminal constraint, and then evolving $Z^{\star}$ using the time-reversal of \eqref{eqn:SDE}. Due to the influence of the score $s(t,x_t)$, the process will end at the initial constraint $Z_T^{\star}=x_0$ by construction. This connection between simulation of a diffusion bridge and time-reversal of its original diffusion process will form the basis of our score approximation. We refer readers to Section \ref{sec:supp_diffusion_bridge} of the Supplementary Material for an alternative and more elementary argument to establish this connection.

\subsection{Learning time-reversal with score matching}\label{sec:learn_reversal}
We introduce a variational formulation to learn the time-reversal of \eqref{eqn:SDE}, involving path measures on the space of continuous functions from $[0,T]$ to $\mathbb{R}^d$, equipped with the cylinder $\sigma$-algebra. 
Consider the time-reversed process $Z=(Z_t)_{t\in[0,T]}$ satisfying 
\begin{align}\label{eqn:reversed_process}
    \rmd Z_t = b_{\phi}(t,Z_t)\rmd t + \sigma(T-t,Z_t) \rmd B_t, 
    \quad Z_0\sim p(T,\rmd x_T\mid 0,x_0),
\end{align}
with drift function $b_{\phi}(t,z_t)=-f(T-t,z_t) + \Sigma(T-t,z_t)s_{\phi}(T-t,z_t) + \nabla \cdot\Sigma(T-t,z_t)$ that mimics the form of $b(t,z_t)$ in \eqref{eqn:reversal_bridge}. Here $s_{\phi}:[0,T]\times\mathbb{R}^d\rightarrow\mathbb{R}^d$ represents a function approximation of the score $s(t,x_t)$ that depends on parameters $\phi\in\Phi$ to be optimized. We shall measure the score approximation error as 
\begin{align}
\label{eq:e_phi_def}
    e(\phi) = {E}^{x_0}\left\lbrace\int_0^T\left\|s_{\phi}(t,X_t)-s(t,X_t)\right\|_{\Sigma(t,X_t)}^2\rmd t\right\rbrace,
\end{align}
where $E^{x_0}$ denotes expectation with respect to the path measure $\mathbb{P}^{x_0}$ induced by $(X_t)_{t\in[0,T]}$ in \eqref{eqn:SDE} with $X_0=x_0$, and $\|\cdot\|_{A}$ denotes the Euclidean norm weighted by a positive definite $A\in\mathbb{R}^{d\times d}$. 

Let $\mathbb{Q}_{\phi}^{x_0}$ be the path measure induced by the parameterised forward process $(Z_{T-t})_{t\in[0,T]}$. 
We consider minimizing the Kullback--Leibler divergence $\textsc{kl}(\mathbb{P}^{x_0}|\mathbb{Q}_{\phi}^{x_0}) = E^{x_0}\{\log \rmd \mathbb{P}^{x_0}/\rmd \mathbb{Q}_{\phi}^{x_0}(X)\}$, where $\rmd \mathbb{P}^{x_0}/\rmd \mathbb{Q}_{\phi}^{x_0}$ denotes the Radon--Nikodym derivative of $\mathbb{P}^{x_0}$ with respect $\mathbb{Q}_{\phi}^{x_0}$. 
The following result gives an upper bound of this objective and shows that the process $Z$ will end at the initial constraint $Z_T=x_0$ if the score approximation error is finite. 

\begin{proposition}
\label{prop:backward_KL}
Assuming $e(\phi)<\infty$, we have  $\textsc{kl}(\mathbb{P}^{x_0}|\mathbb{Q}_{\phi}^{x_0})\leq e(\phi) / 2$ and $Z_T=x_0$ holds $\mathbb{Q}_{\phi}^{x_0}$-almost surely. 
\end{proposition}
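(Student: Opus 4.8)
The plan is to compute the Radon-Nikodym derivative $\rmd \mathbb{P}^{x_0}/\rmd \mathbb{Q}_{\phi}^{x_0}$ directly via Girsanov's theorem and then take expectations. The key observation is that both path measures should be compared by running them in the same time direction. Since $\mathbb{Q}_{\phi}^{x_0}$ is defined for the reversed process $Z$, I would first rewrite $\mathbb{P}^{x_0}$ in terms of the time-reversed coordinate, i.e., work with the law of $(X_{T-t})_{t\in[0,T]}$. By the time-reversal formula underlying \eqref{eqn:reversal_bridge} (applied to the unconditioned diffusion rather than the bridge), the exact reversal of \eqref{eqn:SDE} has drift $-f(T-t,z_t) + \Sigma(T-t,z_t)\nabla\log\eta(T-t,z_t) + \nabla\cdot\Sigma(T-t,z_t)$ with the same diffusion coefficient $\sigma(T-t,z_t)$, and initial law $\eta(T,\cdot)=p(T,\cdot\mid 0,x_0)$. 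Crucially, this matches $b_{\phi}$ exactly except that $s_{\phi}$ replaces the true score $s = \nabla\log\eta$, and the two processes share the same initial distribution $p(T,dx_T\mid 0,x_0)$ and the same diffusion coefficient.

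Second, since the two SDEs differ only in drift and share the same (non-degenerate) diffusion coefficient and initial law, Girsanov's theorem applies and the initial-condition contribution to the KL divergence vanishes. The log Radon-Nikodym derivative is then the standard Cameron-Martin-Girsanov expression: writing $\Delta(t,z) = b(t,z) - b_{\phi}(t,z) = \Sigma(T-t,z)\{s(T-t,z) - s_{\phi}(T-t,z)\}$ for the drift discrepancy, one gets $\log \rmd\mathbb{P}^{x_0}/\rmd\mathbb{Q}_{\phi}^{x_0}$ as a stochastic integral term (a martingale under the right measure, mean zero) plus $\frac{1}{2}\int_0^T \|\Sigma(T-t,Z_t)^{-1}\Delta(t,Z_t)\|^2_{\Sigma(T-t,Z_t)}\rmd t$. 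Here I would use that $\|\Sigma^{-1}v\|_{\Sigma}^2 = v^\T \Sigma^{-1} v = \|v\|_{\Sigma^{-1}}^2$, and that $\Sigma^{-1}\Delta = s - s_{\phi}$, so the quadratic term collapses to $\frac12\int_0^T\|s(T-t,Z_t) - s_{\phi}(T-t,Z_t)\|^2_{\Sigma(T-t,Z_t)}\rmd t$. Taking expectation under $\mathbb{P}^{x_0}$, the martingale term drops and a change of variable $t\mapsto T-t$ together with the fact that under $\mathbb{P}^{x_0}$ the marginal of $X_{T-t}$ (equivalently $Z_t$ under the reversed law) has density $\eta(T-t,\cdot)$ yields exactly $e(\phi)/2$. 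The finiteness $e(\phi)<\infty$ is what legitimizes Girsanov (Novikov-type condition via the reasoning in \citet{millet1989integration}) and guarantees the martingale term is genuinely mean-zero.

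For the second claim, I would argue that $\textsc{kl}(\mathbb{P}^{x_0}|\mathbb{Q}_{\phi}^{x_0}) = e(\phi)/2 < \infty$ forces $\mathbb{P}^{x_0} \ll \mathbb{Q}_{\phi}^{x_0}$. Under $\mathbb{P}^{x_0}$, the terminal value of the time-reversed coordinate is $X_0 = x_0$ with probability one, i.e., the event $\{Z_T = x_0\}$ has $\mathbb{P}^{x_0}$-probability one. By absolute continuity, this event must also have $\mathbb{Q}_{\phi}^{x_0}$-probability one, which is precisely the statement $Z_T = x_0$ holds $\mathbb{Q}_{\phi}^{x_0}$-almost surely.

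I expect the main obstacle to be the rigorous justification of the time-reversal identification in the first step and of applying Girsanov despite the singular behaviour near the endpoints: the true reversed drift involves $\nabla\log\eta(T-t,z_t)$ which blows up as $t\to T$ (since $\eta(0,\cdot)$ is a Dirac at $x_0$), so the change of measure is delicate near the terminal time. The assumption $e(\phi)<\infty$ is doing real work here — it controls the drift discrepancy in an $L^2$-in-time-and-space sense — and I would lean on the regularity conditions and time-reversal results of \citet{haussmann1986time} and \citet{millet1989integration} cited in the paper, rather than reprove them, to handle integrability of the Girsanov exponent and the vanishing of the stochastic integral's expectation.
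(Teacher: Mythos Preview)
Your proposal is correct and follows essentially the same route as the paper: represent $\mathbb{P}^{x_0}$ via the time-reversed SDE with drift $b$, apply Girsanov between the two backward processes sharing the same diffusion coefficient and initial law, use the identity $\|\Sigma x\|_{\Sigma^{-1}}^2=\|x\|_{\Sigma}^2$ to reduce the quadratic term to $e(\phi)/2$, and deduce $Z_T=x_0$ from absolute continuity. The only difference is cosmetic: the paper writes the Girsanov density directly in terms of $\|b-b_\phi\|_{\Sigma^{-1}}^2$ before simplifying, whereas you first isolate $\Sigma^{-1}\Delta=s-s_\phi$, but the computation and the absolute-continuity argument for the terminal constraint are identical.
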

Under Novikov's condition, an exponential integrability assumption, Girsanov's theorem shows that $\textsc{kl}(\mathbb{P}^{x_0}|\mathbb{Q}_{\phi}^{x_0})= e(\phi) / 2$ \cite[Theorem 5.22, Theorem 4.13]{le2016brownian}. We do not provide here any assumption on $s_\phi(t,x_t)$ to guarantee that $e(\phi)<\infty$. 
Previous work by \citet{delyon2006simulation} and \citet{schauer2017guided} have studied necessary conditions for the law of a proposal bridge process and the law of the diffusion bridge to be equivalent. These findings suggest that the behaviour of $b_{\phi}(t,z_t)$ should be $(x_0-z_t)/t$ as $t\rightarrow 0$, which can be used to guide our parameterization of the score approximation. For example, we consider $s_{\phi}(t,x_t)=\Sigma^{-1}(t,x_t)(x_0-x_t)/t + v_{\phi}(t, x_t)$ where $v_{\phi}:[0,T]\times\mathbb{R}^d\rightarrow\mathbb{R}^d$ denotes a bounded function. 

Although Proposition \ref{prop:backward_KL} clearly relates the approximation of the time-reversal of \eqref{eqn:SDE} to the approximation of the score $s(t,x_t)$, its form is not amenable to optimization. The following result gives an alternative and practical expression by adapting the idea of denoising score matching \citep{vincent2011connection} to our setting. 

\begin{proposition}\label{prop:denoise}
For any partition $(t_m)_{m=0}^M$ of the interval $[0,T]$, we have $\textsc{kl}(\mathbb{P}^{x_0}|\mathbb{Q}_{\phi}^{x_0}) \leq L(\phi) + C$ if $e(\phi)<\infty$, where $C$ is a constant independent of $\phi\in\Phi$, the loss function is defined as 
\begin{align}\label{eqn:loss}
	L(\phi)=\frac{1}{2}\sum_{m=1}^M\int_{t_{m-1}}^{t_m}{E}^{x_0}\left\lbrace\left\|s_{\phi}(t,X_t) - g(t_{m-1},X_{t_{m-1}},t,X_t) \right\|_{\Sigma(t,X_t)}^2\right\rbrace\rmd t,
\end{align}
and $g(s,x_s,t,x_t)=\nabla\log p(t,x_t\mid s,x_s)$ for $0\leq s<t\leq T$ and $x_s,x_t\in\mathbb{R}^d$.
\end{proposition}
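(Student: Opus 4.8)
\emph{Proof strategy.} The plan is to reduce the claimed identity to a pointwise-in-time denoising score matching identity and then integrate and sum over the partition. I would first recall from Proposition~\ref{prop:backward_KL} that $\textsc{kl}(\mathbb{P}^{x_0}\,|\,\mathbb{Q}_{\phi}^{x_0})=e(\phi)/2$ whenever $e(\phi)<\infty$, and from \eqref{eqn:marginal_bridge} that the target score satisfies $s(t,x_t)=\nabla\log p(t,x_t\mid 0,x_0)=g(0,x_0,t,x_t)$. Writing $0=t_0<t_1<\cdots<t_M=T$ for the partition and comparing $L(\phi)$ in \eqref{eqn:loss} with $e(\phi)/2$, it suffices to show that for each $m\in\{1,\ldots,M\}$ and Lebesgue-almost every $t\in(t_{m-1},t_m]$,
\[
E^{x_0}\!\left\{\|s_{\phi}(t,X_t)-g(t_{m-1},X_{t_{m-1}},t,X_t)\|_{\Sigma(t,X_t)}^2\right\}=E^{x_0}\!\left\{\|s_{\phi}(t,X_t)-s(t,X_t)\|_{\Sigma(t,X_t)}^2\right\}+r_m(t),
\]
where $r_m(t)\geq 0$ does not depend on $\phi$; halving, integrating over $t\in(t_{m-1},t_m]$ and summing over $m$ then yields $L(\phi)=e(\phi)/2+C$ with $C=\tfrac12\sum_{m=1}^M\int_{t_{m-1}}^{t_m}r_m(t)\,\rmd t$ independent of $\phi$.

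\emph{The key identity.} The heart of the argument is the conditional-mean identity
\[
E^{x_0}\!\left[\,g(t_{m-1},X_{t_{m-1}},t,X_t)\,\middle|\,X_t=x_t\right]=s(t,x_t)\qquad\text{for a.e. }x_t\in\mathbb{R}^d.
\]
To establish it, I would use that under $\mathbb{P}^{x_0}$ the pair $(X_{t_{m-1}},X_t)$ has joint density $p(t_{m-1},x_{m-1}\mid 0,x_0)\,p(t,x_t\mid t_{m-1},x_{m-1})$ (writing $x_{m-1}$ for the value of $X_{t_{m-1}}$), with $x_t$-marginal $p(t,x_t\mid 0,x_0)$ by Chapman--Kolmogorov; hence the conditional law of $X_{t_{m-1}}$ given $X_t=x_t$ has density equal to their ratio. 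Substituting $g(t_{m-1},x_{m-1},t,x_t)=\nabla_{x_t}p(t,x_t\mid t_{m-1},x_{m-1})/p(t,x_t\mid t_{m-1},x_{m-1})$, the transition-density factor cancels and one is left with $p(t,x_t\mid 0,x_0)^{-1}\int \nabla_{x_t}\{p(t_{m-1},x_{m-1}\mid 0,x_0)\,p(t,x_t\mid t_{m-1},x_{m-1})\}\,\rmd x_{m-1}$; interchanging $\nabla_{x_t}$ with the integral and invoking Chapman--Kolmogorov once more gives $\nabla_{x_t}p(t,x_t\mid 0,x_0)/p(t,x_t\mid 0,x_0)=s(t,x_t)$. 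This is exactly the denoising score matching identity of \citet{vincent2011connection}, with the Gaussian noising kernel replaced by the (intractable) transition kernel of \eqref{eqn:SDE}. I expect the differentiation under the integral sign to be the only delicate point: it is where the standing regularity assumptions on $f$ and $\sigma$ enter, since they ensure the transition densities are smooth in the forward spatial variable with gradients admitting Gaussian-type bounds, so that dominated convergence applies.

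\emph{Pythagorean decomposition and conclusion.} Finally, fixing $m$ and $t\in(t_{m-1},t_m]$, I would write $s_{\phi}(t,X_t)-g(t_{m-1},X_{t_{m-1}},t,X_t)=\{s_{\phi}(t,X_t)-s(t,X_t)\}-\{g(t_{m-1},X_{t_{m-1}},t,X_t)-s(t,X_t)\}$ and expand the $\Sigma(t,X_t)$-weighted square. Conditioning on $X_t$, the cross term vanishes: the factors $s_{\phi}(t,X_t)-s(t,X_t)$ and $\Sigma(t,X_t)$ are $\sigma(X_t)$-measurable, and $E^{x_0}[g(t_{m-1},X_{t_{m-1}},t,X_t)-s(t,X_t)\mid X_t]=0$ by the key identity. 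Hence the displayed pointwise identity holds with $r_m(t)=E^{x_0}\{\|g(t_{m-1},X_{t_{m-1}},t,X_t)-s(t,X_t)\|_{\Sigma(t,X_t)}^2\}\geq 0$, manifestly independent of $\phi$. Integrating over $t\in(t_{m-1},t_m]$, summing over $m$ and halving, with Fubini justified by nonnegativity of the integrands, gives $L(\phi)=\tfrac12 e(\phi)+C=\textsc{kl}(\mathbb{P}^{x_0}\,|\,\mathbb{Q}_{\phi}^{x_0})+C$ with $C=\tfrac12\sum_{m=1}^M\int_{t_{m-1}}^{t_m}r_m(t)\,\rmd t$; under the regularity that makes $e(\phi)<\infty$ one also has $C<\infty$, and otherwise the identity is read in $[0,\infty]$. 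This completes the plan.
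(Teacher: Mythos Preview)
Your proof is correct and is essentially the paper's own argument: both hinge on differentiating the Chapman--Kolmogorov equation in the forward spatial variable to obtain the identity $E^{x_0}[g(t_{m-1},X_{t_{m-1}},t,X_t)\mid X_t]=s(t,X_t)$, which you package as a conditional-mean/Pythagorean decomposition while the paper uses it to rewrite the cross term in the expansion of $e(\phi)/2$ and then matches terms with the expansion of $L(\phi)$. The only cosmetic discrepancy is that your final identity reads $L(\phi)=\textsc{kl}(\mathbb{P}^{x_0}\,|\,\mathbb{Q}_{\phi}^{x_0})+C$ with $C\geq 0$, whereas the proposition is stated as $\textsc{kl}(\mathbb{P}^{x_0}\,|\,\mathbb{Q}_{\phi}^{x_0})=L(\phi)+C$; this is merely a sign flip in the constant and does not affect the claim.
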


Therefore minimizing the Kullback--Leibler divergence $\textsc{kl}(\mathbb{P}^{x_0}|\mathbb{Q}_{\phi}^{x_0})$ is equivalent to minimizing the loss function $L(\phi)$. This allows us to circumvent the intractable score $s(t,x_t)$ by working with $g(t_{m-1},x_{t_{m-1}},t,x_t)$, the score of the transition density $p(t,x_t\mid t_{m-1},x_{t_{m-1}})$. Although the latter is also intractable, approximations can be made when the sub-interval $[t_{m-1},t_m]$ is sufficiently small. For example, under the Euler--Maruyama scheme \citep[p. 340]{kloeden1992} with stepsize $\delta t=T/M$, 
\begin{align*}
    g(t_{m-1},x_{t_{m-1}},t_m,x_{t_m})\approx-(\delta t)^{-1}\Sigma^{-1}(t_{m-1},x_{t_{m-1}})\{x_{t_m} - x_{t_{m-1}} - \delta t f(t_{m-1},x_{t_{m-1}})\}. 
\end{align*}
Hence the loss $L(\phi)$ can be approximated and minimized using stochastic gradient algorithms by simulating time-discretized paths under \eqref{eqn:SDE}. 
The minimal loss of $L(\phi)=-C$, achieved when $s_{\phi}(t,x_t)=s(t,x_t)$ $\mathbb{P}^{x_0}$-almost surely, is unknown as the constant $C$ is typically intractable. 
After obtaining the score approximation $s_{\phi}$, we can simulate a proposal bridge $Z$ from \eqref{eqn:reversed_process} with $Z_0=x_T$ and correct it using importance sampling or independent Metropolis--Hastings.
Time-discretization considerations and proposal correction procedures are detailed in Section \ref{sec:numerical_implement} of the Supplementary Material. 

In scenarios where one is interested in multiple pairs of conditioned states $(x_0,x_T)$, we can extend the above methodology to avoid having to learn multiple score approximations as follows. 
We let the score approximation $s_{\phi}(t,x_t)$ in \eqref{eqn:reversed_process} also depend on the initial condition $x_0$, and average the Kullback--Leibler objective $\textsc{kl}(\mathbb{P}^{x_0}|\mathbb{Q}_{\phi}^{x_0})$ with a distribution $p_0(\mathrm{d}x_0)$ on $\mathbb{R}^d$ that can be sampled from. By applying the arguments of Proposition~\ref{prop:denoise} conditionally on $X_0=x_0$, we obtain a loss function given by averaging \eqref{eqn:loss} over $p_0(\mathrm{d}x_0)$, which can be minimized using time-discretization and stochastic gradient algorithms. In this setting, we parameterize the score approximation as 
$s_{\phi}(t,x_t,x_0)=\Sigma^{-1}(t,x_t)(x_0-x_t)/t + v_{\phi}(t, x_t, x_0)$, where $v_{\phi}(t, x_t, x_0)$ is now a function that also depends on the initial condition $x_0$.

\subsection{Learning Doob's $h$-transform}\label{sec:doob_transform}
It is instructive to consider the time-reversal of \eqref{eqn:reversal_bridge}, which gives 
\begin{align}\label{eqn:double_reversal}
	\rmd X_t^{\star} = f^{\star}(t,X_t^{\star})\rmd t + \sigma(t,X_t^{\star})\rmd W_t, \quad X_0^{\star}=x_0,
\end{align}
with drift function $f^{\star}(t,x_t) = - b(T-t,x_t) + \Sigma(t,x_t)s^{\star}(t,x_t) + \nabla\cdot\Sigma(t,x_t)$. 
Using the form of $b(t,z_t)$ and the relation $s(t,x_t) = s^{\star}(t,x_t) - \nabla \log h(t,x_t)$, we can rewrite $f^{\star}$ as 
\begin{align}\label{eqn:recover_htransform}
	f^{\star}(t,x_t) = f(t,x_t) + \Sigma(t,x_t)\{s^{\star}(t,x_t)-s(t,x_t)\} = f(t,x_t) + \Sigma(t,x_t)\nabla\log h(t,x_t),
\end{align}
and recover the Doob's $h$-transform in \eqref{eqn:doob_SDE}. 
Although this is to be expected as the reversal of the time-reversed process should recover the original process, it forms the basis of our approximation of \eqref{eqn:doob_SDE}. 

After obtaining an approximation of $s(t,x_t)$, another iteration of our methodology can be used to obtain a function approximation of $s^{\star}(t,x_t)$. For brevity, this is detailed in Section \ref{sec:doob_transform} of the Supplementary Material. Plugging in both approximations in \eqref{eqn:recover_htransform} then gives an approximation of $X^{\star}$, which could be of interest in algorithms where one requires a forward time representation of the proposal bridge process \citep{lin2010generating,del2015sequential}. 
However, this comes at the cost of learning two approximations and some accumulation of errors. Recent work by \citet{baker2024score} have extended our approach to directly approximate $\nabla\log h(t,x_t)$, which is not necessarily the score of a probability density.

\section{Related work on generative modeling}
Denoising diffusion models are popular state-of-the-art generative models \citep{sohl2015deep,song2020score}. These models are based on a diffusion process that transforms data into random normal noise. Their time-reversal is then approximated to obtain a generative model that maps sampled noise to synthetic data. Recent extensions have generalized these models to allow the distribution at the terminal time $T$ to be arbitrary rather than a normal distribution \citep{de2021diffusion,vargas2021solving,chen2021likelihood}. While there are similarities between these methods and our proposed methodology, the challenges in simulating a diffusion bridge between two states $x_0$ and $x_T$ are distinctly different. Firstly, the diffusion process in \eqref{eqn:SDE} usually represents a probabilistic model for a problem of interest with an intractable transition density. In contrast, denoising diffusion models employ an Ornstein--Uhlenbeck process, leveraging its analytically tractable transition density to learn the score. 
Secondly, while the time dimension in \eqref{eqn:SDE} arises from modeling time series data, with the time interval $T$ representing the time between observations, the time variable $T$ in denoising diffusion models is artificially introduced and represents the time necessary for the diffusion to transform the data distribution into a normal distribution. Thirdly, the time-reversed process \eqref{eqn:reversal_bridge} is initialized from a conditioned state $x_T$ in our setting, whereas it is initialized from a normal distribution in generative models. 

\section{Numerical examples}\label{sec:examples}
\subsection{Preliminaries}\label{sec:preliminaries}
As our methodology allows one to employ any function approximator, we harness the flexibility of neural networks and the ease of implementation using modern software to approximate score functions. 
We adopt the parameterization of the score approximation in Section \ref{sec:learn_reversal} with $v_{\phi}$ defined by a neural network. The choice of neural network and stochastic gradient algorithm is detailed in Section \ref{sec:nn} of the Supplementary Material. Optimization times ranged between several seconds to a few minutes on a simple desktop machine and can be reduced with hardware accelerators. As such computational overheads are marginal when deploying proposal bridge processes within an independent Metropolis--Hastings algorithm with many iterations or an importance sampler with many samples, we focus on assessing the quality of our proposals in settings where existing proposal methods are unsatisfactory. The performance measures considered are the acceptance rate of independent Metropolis--Hastings, and in the case of an importance sampling estimator $\hat{p}(T,x_T\mid 0,x_0)$ of $p(T,x_T\mid 0,x_0)$, the effective sample size proportion, and the variance var$\{\log \hat{p}(T,x_T\mid 0,x_0)\}$ or the mean squared error ${E}[\{\log \hat{p}(T,x_T\mid 0,x_0)-\log p(T,x_T\mid 0,x_0)\}^2]$ when the true transition density is known. 
More implementation details are described in Section \ref{sec:implement_details} of the Supplementary Material. 
These measures were computed using $1024$ samples or iterations, and $100$ independent repetitions of each method. We benchmark our approximations of the backward diffusion bridge (BDB) and forward diffusion bridge (FDB) in \eqref{eqn:reversal_bridge} and \eqref{eqn:double_reversal} against 
the Clark--Delyon--Hu (CDH) proposal bridge studied by \citet{clark1990simulation} and \citet{delyon2006simulation}, the forward diffusion (FD) of \citet{pedersen1995consistency}, 
the guided proposal bridge (GPB) by \citet{schauer2017guided}, 
and the modified diffusion bridge (MDB) of \citet{durham2002numerical}. 
Given the difficulty of comparing the wide range of methods for diffusion bridges in a completely fair manner, as their strengths and weaknesses can depend on the specificities of the problem under consideration, we note that our objective is merely to illustrate a new avenue to improve the construction of proposal bridge processes. A Python package to reproduce all numerical results is available online\footnote{\url{https://github.com/jeremyhengjm/DiffusionBridge}}.

\subsection{Ornstein--Uhlenbeck process} 
Consider \eqref{eqn:SDE} with linear drift function $f(t,x_t)=-2 x_t$ and identity diffusion coefficient $\sigma(t,x_t)=I_d$. 
The transition density and score function are explicitly known and used as ground truth. 
The first and second rows of Fig. \ref{fig:ou} illustrates the impact of the time horizon $T$ and dimension $d$ on algorithmic performance with the constraints $x_0=x_T=(1,\ldots,1)$. 
To study how performance degrades when we condition on rare states under the diffusion process $\eqref{eqn:SDE}$, in the third row of Fig. \ref{fig:ou}, we set the initial state as $x_0=1$ and vary where the terminal state $x_T$ is in the tails of the transition density $p(T,x_T\mid 0,x_0)$ in multiples of its standard deviation.

We omit the guided proposal bridge in this example as it performs perfectly when its auxiliary process is the Ornstein--Uhlenbeck process. 
Our proposed methods offer substantial improvements over other methods without exploiting the correct parameterization. 
When comparing our forward and backward diffusion bridge approximations, we notice some accumulation of error, which is to be expected as the forward process is constructed using an additional score approximation. Given this observation, we will only consider the backward process in the following.

\begin{figure}
\centering
\includegraphics[width=1\linewidth]{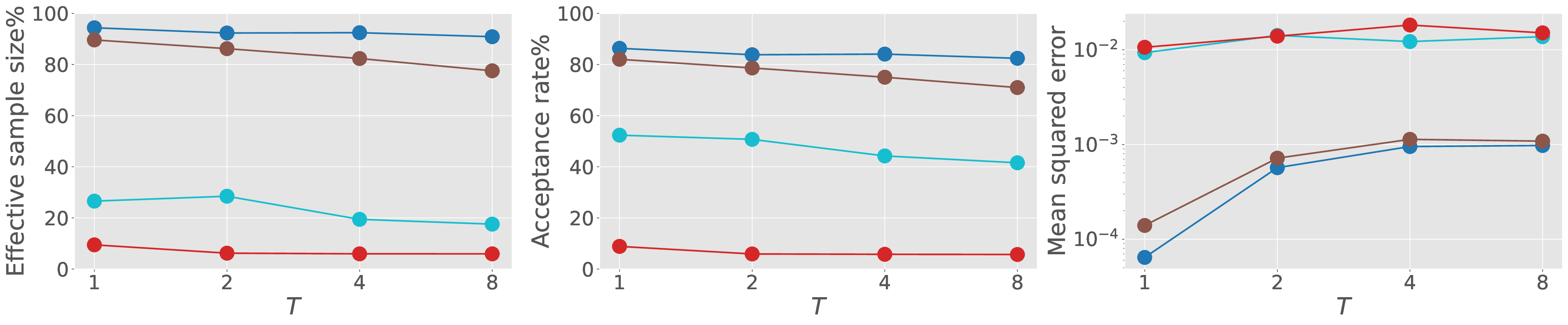}
\includegraphics[width=1\linewidth]{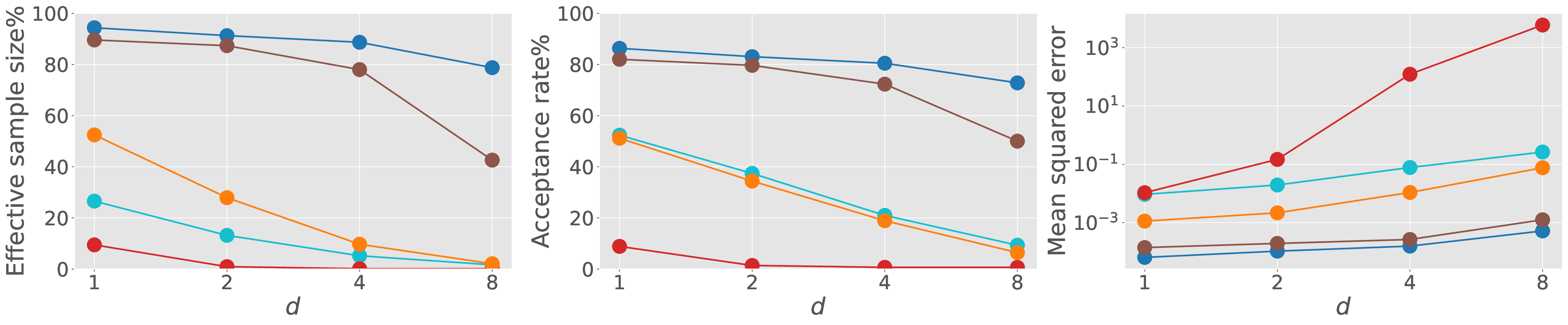}
\includegraphics[width=1\linewidth]{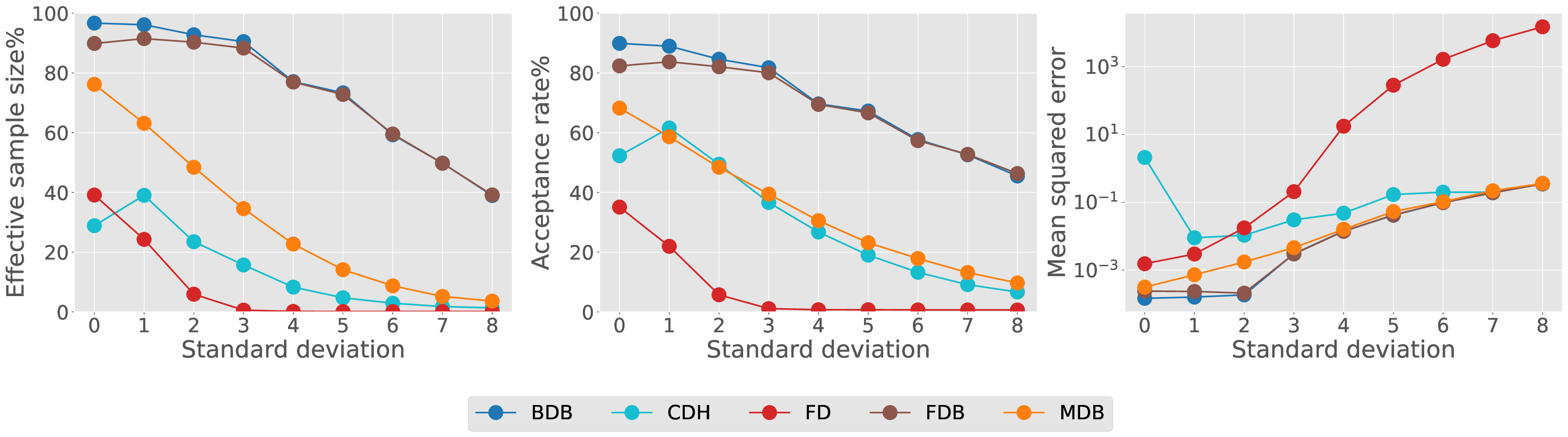}
\caption{Results for Ornstein--Uhlenbeck example with fixed $d=1$ (\emph{first row}), fixed $T=1$ (\emph{second row}), and fixed $d=1,T=1$ (\emph{third row}). 
}
\label{fig:ou}
\end{figure}

\subsection{Interest rates model} 
We consider a special case of an interest rates model in \citet{ait1998nonparametric}, defined by \eqref{eqn:SDE} with $f(t,x_t)=4/x_t-x_t$ and $\sigma(t,x_t)=1$. This specification admits a tractable transition density and score function as ground truth, given in terms of modified Bessel functions and its derivative. 
For each $T$, we first learn a single score approximation to handle multiple conditioned states $(x_0,x_T)$ by minimizing the loss in \eqref{eqn:loss} averaged over initial states $X_0=x_0$ from the gamma distribution with shape $5$ and rate $2$. 
We construct guided proposal bridges based on a first-order Taylor approximation of $f$ at the stable stationary point $x^{\star}=2$. 
We plot our score approximation error for $T=1$ in Fig. \ref{fig:radial_score}, and examine how algorithmic performance depends on $T$ and $(x_0,x_T)$ in Fig. \ref{fig:radial}. 
Our backward diffusion bridge approximation performs well for all $T$ considered and when conditioning requires the process to move away from the stationary point (first and third columns). Its performance is similar to guided proposal bridge and significantly better than other existing methods. 
While the performance of modified diffusion bridge typically degrades with $T$, the results of forward diffusion and Clark--Delyon--Hu bridge depend on the specific conditioned states. 

\begin{figure}
\centering
\includegraphics[width=1\linewidth]{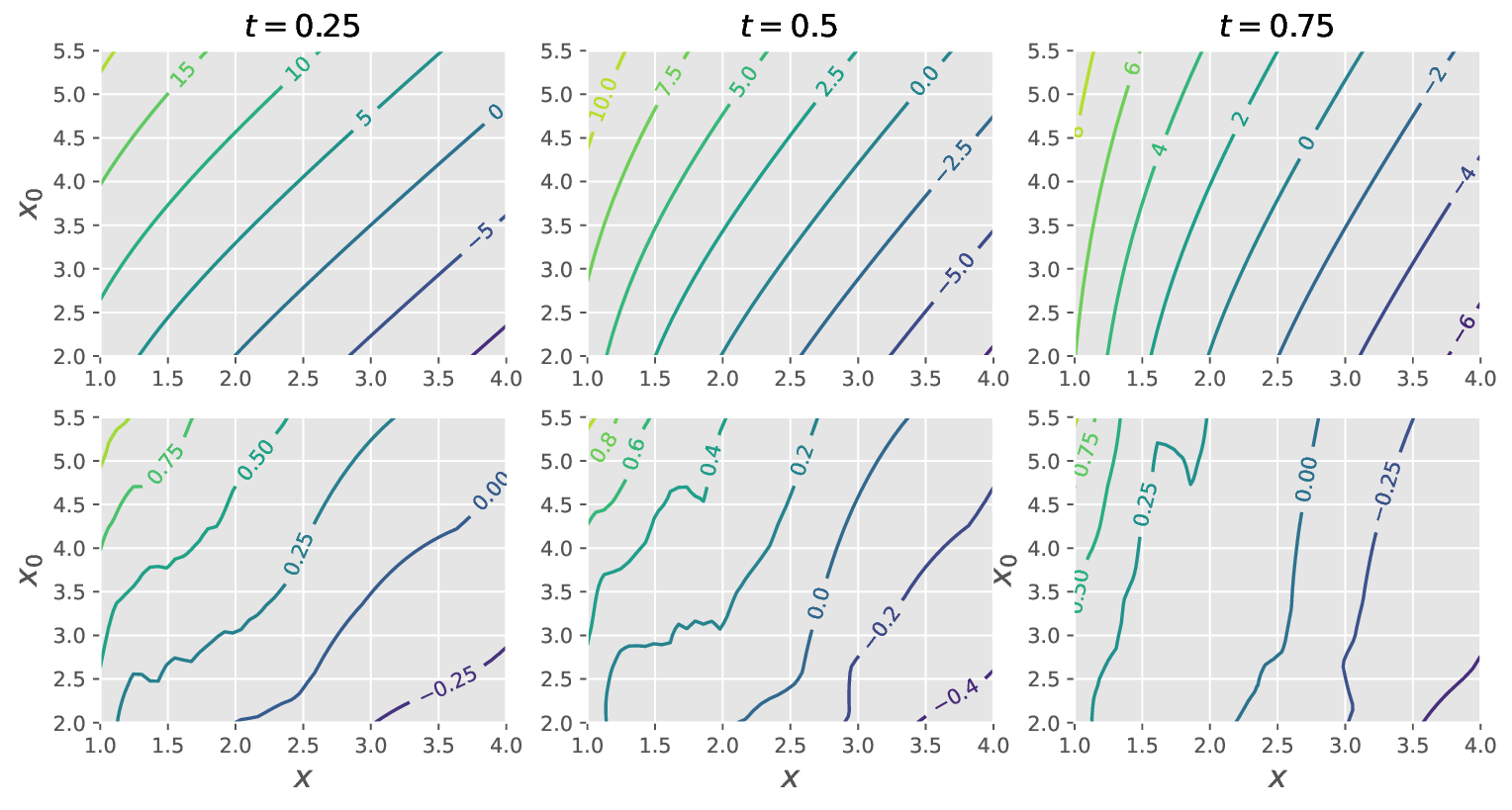}
\caption{True score function $s(t,x)$ (\emph{first row}) and the function approximation error $s(t,x)-s_{\phi}(t,x)$ (\emph{second row}) for various time $t$ and initial condition $x_0$.}
\label{fig:radial_score}
\end{figure}

\begin{figure}
\centering
\includegraphics[width=1\linewidth]{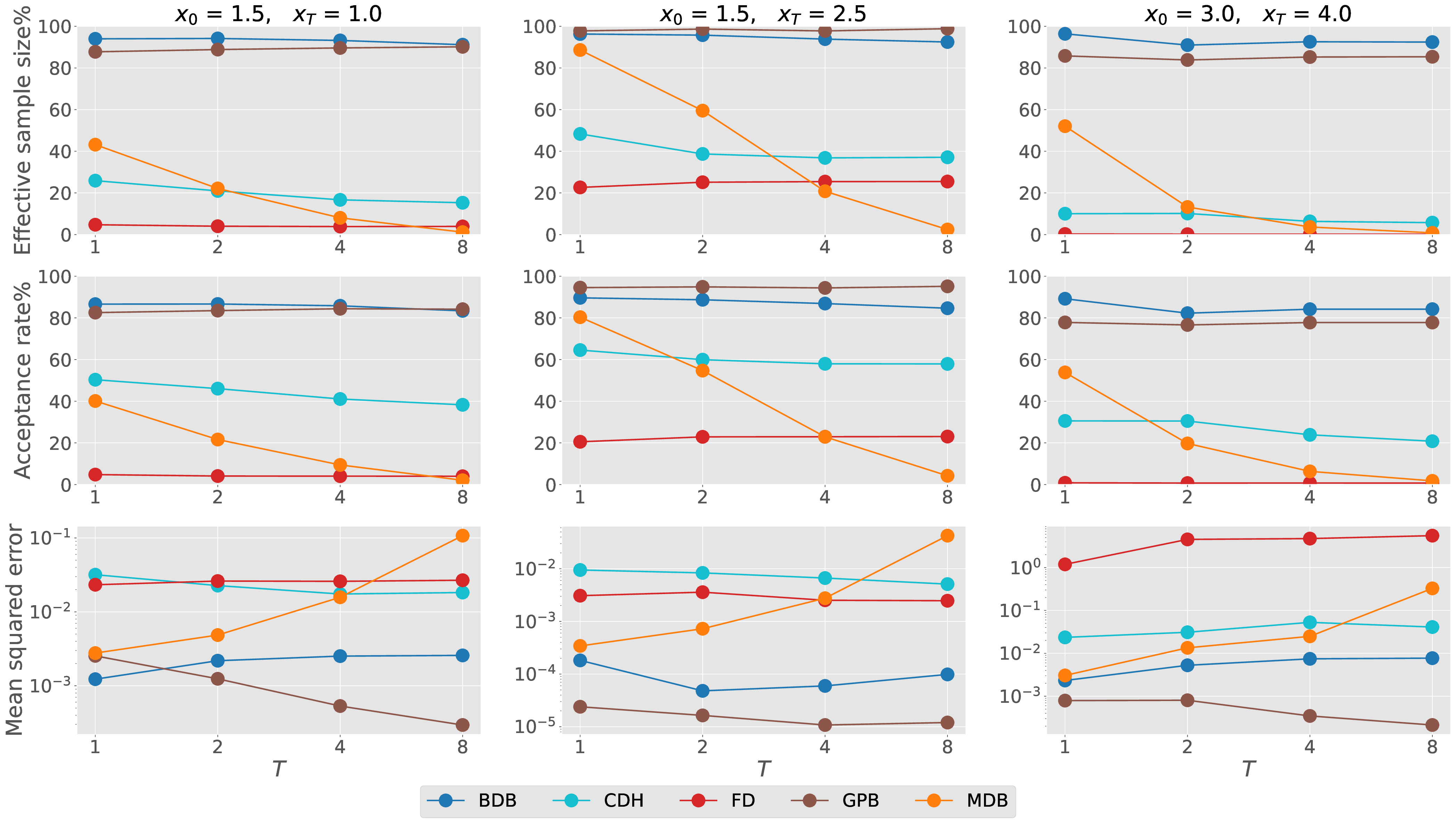}
\caption{Results for interest rates model for three pairs of conditioned states $(x_0,x_T)$.}
\label{fig:radial}
\end{figure}

\subsection{Cell model}
We end with a cell differentiation and development model from  \cite{wang2011quantifying}. Cellular expression levels $X_t=(X_{t,1},X_{t,2})$ of two genes are modelled by \eqref{eqn:SDE} with $f=(f_1,f_2)$, where 
$f_i(t,x_{t})=x_{t,i}^4/(2^{-4}+x_{t,i}^4) + 2^{-4}/(2^{-4}+x_{t,j}^4)- x_{t,i}$, 
for $(i,j)\in\{(1,2), (2,1)\}$, describe self-activation, mutual inhibition and inactivation respectively, and $\sigma(t,x_t)=\sigma_X I_2$ captures intrinsic or external fluctuations. We consider the cell development from the undifferentiated state of $x_0=(1,1)$ to a differentiated state $x_T=(x_{T,1},x_{T,2})$ defined by another stable fixed point of $f$ satisfying $x_{T,1}<x_{T,2}$. 
We employ an auxiliary Ornstein--Uhlenbeck process in the guided proposal bridge and optimize its parameters using the Kullback--Leibler divergence considered in \citet{schauer2017guided}. 
Fig. \ref{fig:cell} displays the performance of all proposal methods to approximate this diffusion bridge. For this application, we observe improvement over guided proposal bridge when the diffusion coefficient $\sigma_X$ is smaller, and significant improvement over other methods when the time horizon is long. These findings are consistent with numerical results obtained in \citet{baker2024score} for the same model.

\begin{figure}
\centering
\includegraphics[width=1\linewidth]{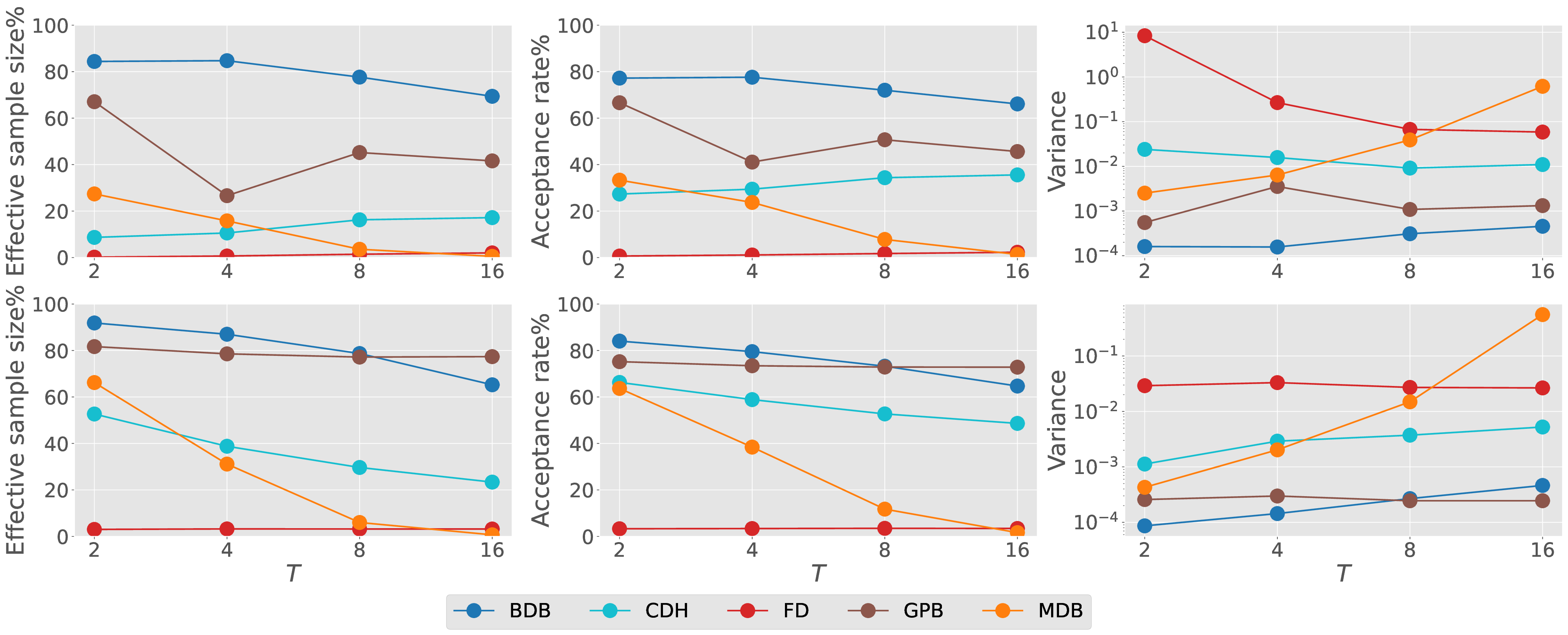}
\caption{Results for cell model with $\sigma_X^2=0.1$ (\emph{first row}) and $\sigma_X^2=1$ (\emph{second row}).}
\label{fig:cell}
\end{figure}
    
\section{Discussion}
While implicit score matching \citep{hyvarinen2005estimation} could have been an alternative to denoising score matching \citep{vincent2011connection}, we found it to be less effective and computationally more expensive. 
If the terminal state $x_T$ is highly unlikely under the law of the unconditional diffusion, the score approximation learned using unconditional sample paths could be poor near $x_T$. Nevertheless, we observed satisfactory experimental results for an Ornstein--Ulhenbeck process up to 8 standard deviations. 
In statistical problems where some hyperparameters of the diffusion are estimated using maximum likelihood or Bayesian inference, a direct application of our approach would require a new diffusion bridge approximation for each set of parameters considered. To avoid incurring significant computational cost, one could also incorporate parameter dependence in the score approximation procedure \citep{boserup2024parameter}. 
Recent work has also extended our approach to approximate directly $\nabla \log h(t,x_t)$ without having to approximate a time-reversal \citep{baker2024score} and to infinite-dimensional diffusion processes \citep{baker2024conditioning}.

\section*{Acknowledgement}
The authors thank the editor, associate editor, and reviewers for their comments, which have helped to improve the paper significantly. 
Arnaud Doucet and James Thornton were partially supported by the U.K. Engineering and Physical Sciences Research Council.

\bibliography{ref}

\appendix

\section{Proofs of Propositions~\ref{prop:backward_KL} and \ref{prop:denoise}}\label{sec:proofs_backward}
In the following, we write $\langle x, y\rangle_{A}=x^\T A y$ for the inner product of $x,y\in\mathbb{R}^d$ weighted by a positive definite matrix $A\in\mathbb{R}^{d\times d}$ and $\|x\|_{A}=\sqrt{\langle x, x\rangle_{A}}$ for the induced weighted Euclidean norm. 
Before diving into the proof of Proposition~\ref{prop:backward_KL}, we recall the data processing inequality. A proof of such result can be found in \citet[Lemma 9.4.5]{ambrosio2005gradient}.

\begin{lemma}
\label{prop:data_processing}
Assume that $\pi_0, \pi_1$ are two probability distributions over a metric space $\mathcal{X}$. Let $\varphi: \mathcal{X} \to \mathcal{X}$ be a Borel map, i.e. a measurable map for the $\sigma$-algebra generated by the open sets of $\mathcal{X}$. Then, we have that 
\begin{equation*}
    \mathrm{KL}(\varphi_{\#} \pi_0 | \varphi_{\#} \pi_1 ) \leq \mathrm{KL}(\pi_0 | \pi_1) . 
\end{equation*}
\end{lemma}
We are now ready to prove the main result of this section. 

\begin{proof}[Proof of Proposition~\ref{prop:backward_KL}]
The proof is similar to \citet[Section 5.2]{chen2022sampling}, see also \cite{benton2023linear}.  We recall that a process $(B_t)_{t \in [0,T]}$ defined in $(\Omega, \mathcal{F}, (\mathcal{F}_t)_{t \geq 0}, \mathbb{P})$ is a $\mathbb{P}$-Brownian motion if $(B_t)_{t \in [0,T]}$ is almost surely continuous and for any $s, t \in [0,T]$, $B_t - B_s$ is a normal random variable with zero mean and covariance matrix $|t-s| I_d$. If there is no possible ambiguity, we simply say that $(B_t)_{t \in [0,T]}$ is a Brownian motion. First, we give a version of  Girsanov's theorem which is a consequence of \citet[Theorem 4.13, Theorem 5.22]{legall2016brownian}.

    Let us give a few more details on the applications of \citet[Theorem 4.13]{legall2016brownian} and \citet[Theorem 5.22]{legall2016brownian} for the derivation of Theorem~\ref{prop:standardGirsanov}. We follow the approach described in \citet[p.136]{legall2016brownian}, where a process $\mathcal{L}_t = \int_0^t \Delta_s \mathrm{d} B_s$ is first defined. Using \citet[Theorem 4.13]{legall2016brownian}, we immediately get that the process $(\mathcal{L}_t)_{t \in [0,T}$ is a martingale. We follow the rest of the paragraph of \citet[p.136]{legall2016brownian}, notably leveraging that ${E}_{\mathbb{P}}[\mathcal{E}(\mathcal{L})_T] = 1$, to obtain that $(\mathcal{E}(\mathcal{L})_{t \in [0,T]}$ is a $\mathbb{P}$-martingale. The rest of the theorem is a direct application of \citet[Theorem 5.22]{legall2016brownian}.

\begin{theorem}
\label{prop:standardGirsanov}
Let $(\Omega, \mathcal{F}, (\mathcal{F}_t)_{t \geq 0}, \mathbb{P})$ be a filtered probability space and $(\Delta_t)_{t \in [0,T]}$ be a predictable process with ${E}_{\mathbb{P}}\big[\int_0^T \| \Delta_s \|^2 \rmd s\big] < \infty$. Let $(B_t)_{t \geq 0}$ be a $\mathbb{P}$-Brownian motion and define $\mathcal{L}_t = \int_0^t \Delta_s \rmd B_s$. Then, $(\mathcal{L}_t)_{t \in [0,T]}$ is a $\mathbb{P}$-martingale. For any $t \in [0,T]$, we define 
\begin{equation*}
    \mathcal{E}(\mathcal{L})_t = \exp\Big(\int_0^t \Delta_s \rmd B_s - \frac{1}{2} \int_0^t \|\Delta_s\|^2 \rmd s\Big) .
\end{equation*}
Assume that for any $t \in [0,T]$, ${E}_{\mathbb{P}}[\mathcal{E}(\mathcal{L})_T] = 1$ then $(\mathcal{E}(\mathcal{L})_t)_{t \in [0,T]}$ is a $\mathbb{P}$-martingale. Let $\mathbb{Q}$ be a probability measure such that $\rmd \mathbb{Q}/\rmd \mathbb{P} = \mathcal{E}(\mathcal{L})$ and let $(\beta_t)_{t \in [0,T]}$ such that for any $t \in [0,T]$
\begin{equation*}
    \beta_t = B_t - \int_0^t \Delta_s \rmd s .
\end{equation*}
Then $(\beta_t)_{t \in [0,T]}$ is a $\mathbb{Q}$-Brownian motion.
\end{theorem}

Let $x_0 \in \mathbb{R}^d$ and we recall that  $\mathbb{Q}_{\phi}^{x_0}$ is the path measure induced by a time-reversed process $Z=(Z_t)_{t\in[0,T]}$. More precisely we have that $\mathbb{Q}_{\phi}^{x_0}$ is the distribution of $(Z_{T-t})_{t \in [0,T]}$, where $Z=(Z_t)_{t\in[0,T]}$ satisfies 
\begin{align}
    \rmd Z_t = b_{\phi}(t,Z_t)\rmd t + \sigma(T-t,Z_t) \rmd B_t, 
    \quad Z_0\sim p(T, \rmd x_T\mid 0,x_0),
\end{align}
with drift function $b_{\phi}(t,z_t)=-f(T-t,z_t) + \Sigma(T-t,z_t)s_{\phi}(T-t,z_t) + \nabla \cdot\Sigma(T-t,z_t)$. In addition, we have that $Z^{\star}=(Z_t^{\star})_{t\in[0,T]}=(X_{T-t})_{t\in[0,T]}$ is associated with $\mathbb{P}^{x_0}$. More precisely $\mathbb{P}^{x_0}$ is the distribution of $(X_t)_{t \in [0,T]} = (Z_{T-t}^\star)_{t \in [0,T]}$ which satisfies
\begin{align}
    \rmd Z_t^{\star} = b(t,Z_t^{\star})\rmd t + \sigma(T-t,Z_t^{\star}) \rmd B_t', \quad Z_0^\star\sim p(T,\rmd x_T\mid 0,x_0),
\end{align}
with drift function $b(t,z_t) = -f(T-t,z_t) + \Sigma(T-t,z_t)s(T-t,z_t) + \nabla \cdot\Sigma(T-t,z_t)$. Here $(B'_t)_{t\in[0,T]}$ defines a $d$-dimensional Brownian motion. In what follows, we define $(\Delta_t)_{t \in [0,T]}$ such that for any $t \in [0,T]$ 
\begin{equation*}
 \Delta_t = \sigma(T-t, Z_t^\star)^\T \{s(T-t, Z_t^\star) - s_\phi(T-t, Z_t^\star)\} .
\end{equation*}
We recall that 
\begin{align}
\label{eq:e_phi_def_supp}
    e(\phi) &= {E}^{x_0}\left\lbrace\int_0^T\left\|s_{\phi}(t,X_t)-s(t,X_t)\right\|_{\Sigma(t,X_t)}^2\rmd t\right\rbrace \\
    &= {E}^{x_0}\left\lbrace\int_0^T\left\|s_{\phi}(T-t,Z_t^\star)-s(T-t,Z_t^\star)\right\|_{\Sigma(T-t,Z_t^\star)}^2\rmd t\right\rbrace,
\end{align}
where $E^{x_0}$ denotes expectation with respect to $\mathbb{P}^{x_0}$. 

We now provide an outline for the rest of the proof. First, we show that we can define a sequence of measures which approximate $\mathbb{Q}_\phi^{x_0}$. This sequence, denoted by $(\mathbb{Q}_\phi^{n})_{n \in \mathbb{N}}$, corresponds to changing the behaviour of the approximate process, see \eqref{eq:process_approx_n}. In particular, near time $T$, we use the true score $s$ instead of the approximate score $s_\phi$. The important property of this sequence of measures is that we can derive a uniform bound on the Kullback--Leibler divergence $\textsc{KL}(\mathbb{P}^{x_0} |\mathbb{Q}^n_\phi)$, see \eqref{eq:KLQPnbound}. This is done by leveraging results from Girsanov theory, see Theorem~\ref{prop:standardGirsanov}. Finally, we need to ensure that this result implies that $\textsc{KL}(\mathbb{P}^{x_0} |\mathbb{Q}^{x_0}_\phi)$ is bounded. It can easily be seen that when truncating the time, i.e. considering $\mathrm{proj}_\varepsilon(\omega)(t) = \omega(t \wedge (T-\varepsilon))$, we have that for any $\varepsilon > 0$, there exists $n_0 \in \mathbb{N}$ such that for any $n \in \mathbb{N}$ with $n \geq n_0$,  $(\mathrm{proj}_\varepsilon)_{\#} \mathbb{Q}_\phi^n = (\mathrm{proj}_\varepsilon)_{\#} \mathbb{Q}_\phi^{x_0}$. Therefore, using lower semi-continuity of the Kullback--Leibler divergence, see \citet[Lemma 9.4.3]{ambrosio2005gradient}, we are able to provide a finite bound for $\textsc{KL}((\mathrm{proj}_\varepsilon)_{\#}\mathbb{P}^{x_0} |(\mathrm{proj}_\varepsilon)_{\#}\mathbb{Q}^{x_0}_\phi)$ which is uniform in $\varepsilon$. Finally, we can use the data processing inequality in Lemma~\ref{prop:data_processing} to conclude the proof.

Note that $(\Delta_t)_{t \in [0,T]}$ is a predictable process and using the assumption that $ e(\phi) < +\infty$, we have
\begin{equation*}
    {E}^{x_0}\Big[\int_0^{T} \| \Delta_t \|^2 \rmd t\Big]  < + \infty . 
\end{equation*}
For any $t \in [0,T]$, let $\mathcal{L}_t = \int_0^t \Delta_s \rmd B'_s$. Then, using \citet[Proposition 5.11]{legall2016brownian} we have that $(\mathcal{E}(\mathcal{L})_t)_{t \in [0,T]}$ is a continuous local martingale. As a result there exists a  sequence of stopping times $(T_n)_{n \in \mathbb{N}}$ such that $T_n \rightarrow T$ almost surely and $(\mathcal{E}(\mathcal{L})_{t \wedge T_n})_{t \in [0,T]}$ is a continuous martingale for each $n$.

For any $n \in \mathbb{N}$, let  $\mathcal{L}_t^n = \int_0^t \Delta_s 1_{[0,T_n]}(s) \rmd B_s'$ for all $t \in [0,T]$ and $n \in \mathbb{N}$, where $1_A(s)=0$ if $s\in A$ and $0$ if $s\notin A$. Then for any $t \in [0,T]$, we have $\mathcal{E}(\mathcal{L})_{t \wedge T_n} = \mathcal{E}(\mathcal{L}^n)_t$, so $\mathcal{E}(\mathcal{L}^n)$ is a continuous martingale, and it follows that ${E}^{x_0}[\mathcal{E}(\mathcal{L}^n)_{T}] = 1$. Hence, using Theorem \ref{prop:standardGirsanov} we have that, for any $n \in \mathbb{N}$, $\mathbb{Q}^n_\phi$ is such that $\rmd \mathbb{Q}^n_\phi/\rmd \mathbb{P}^{x_0} = \mathcal{E}(\mathcal{L}^n)_{T}$ is a probability measure. Moreover, for any $n \in \mathbb{N}$, $(\beta^n_t)_{t \in [0,T]}$ given by  
\begin{equation*}
    \beta^n_t = B_t' - \int_0^t \Delta_s 1_{[0,T_n]}(s) \rmd s 
\end{equation*}
is a $\mathbb{Q}^n_\phi$-Brownian motion. We also have
\begin{align*}
    \rmd Z_t^\star &= \{ -f(T-t, Z_t^\star) + \Sigma(T-t, Z_t^\star)s_\phi(T-t, Z_t^\star) + \nabla \cdot\Sigma(T-t,Z_t^\star) 1_{[0,T_n]}(t) \rmd t \\
    & \qquad + \{ -f(T-t, Z_t^\star) + \Sigma(T-t, Z_t^\star)s(T-t, Z_t^\star)\} 1_{(T_n, T]}(t) \rmd t + \sigma(T-t, Z_t^\star) \rmd \beta^n_t.
\end{align*}

To clarify, this can also be rewritten as \begin{align*}
    \rmd Z_t^\star &= \{ b_\phi(t, Z_t^\star) 1_{[0,T_n]}(t) + b(t, Z_t^\star) 1_{(T_n,T]}(t) \} \rmd t + \sigma(T-t, Z_t^\star) \rmd \beta^n_t.
\end{align*}
In addition, we have that 
\begin{align}
    \textsc{KL}(\mathbb{P}^{x_0} |\mathbb{Q}^n_\phi) & = {E}^{x_0}[\log \rmd \mathbb{P}^{x_0}/\rmd \mathbb{Q}^n_\phi] = - {E}^{x_0}[\log \mathcal{E}(\mathcal{L}^n)_{T}] \nonumber \\
    & = {E}^{x_0}\left[ - \mathcal{L}_{T_n} + \frac{1}{2} \int_0^{T_n} \|\Delta_s\|^2 \rmd s \right] = \frac{1}{2}{E}^{x_0}\left[\int_0^{T_n} \|\Delta_s\|^2 \rmd s \right] \label{eq:KLQPnbound}
\end{align}
as $\mathcal{L}$ is a $\mathbb{P}^{x_0}$-martingale. Finally, we define for any $n \in \mathbb{N}$ 
\begin{align*}
    \rmd Z_t^n &= \{ -f(T-t, Z_t^n) + \Sigma(T-t, Z_t^n)s_\phi(T-t, Z_t^n) + \nabla \cdot\Sigma(T-t,Z_t^\star) \} 1_{[0,T_n]}(t) \rmd t \\
    & \qquad + \{ -f(T-t, Z_t^n) + \Sigma(T-t, Z_t^n)s(T-t, Z_t^n)\} 1_{(T_n, T]}(t) \rmd t + \sigma(T-t, Z_t^\star)\rmd W_t,
\end{align*}
with $(W_t)_{t \in [0,T]}$ a $d$-dimensional Brownian motion defined over a filtered probability space $(\Omega, \mathcal{F}, (\mathcal{F}_t)_{t \geq 0}, \mathbb{M})$ and $Z_0^n \sim p(T, \rmd x_T | 0, x_0) $. For any $n \in \mathbb{N}$, we have that $\mathbb{Q}^n_\phi$ is the probability measure associated with $(Z_t^n)_{t \in [0,T]}$.
We also define 
\begin{align*}
    \rmd Z_t^{\phi} &= \{ -f(T-t, Z_t^{\phi}) + \Sigma(T-t, Z_t^{\phi})s_\phi(T-t, Z_t) + \nabla \cdot\Sigma(T-t,Z_t^{\phi}) \}  \rmd t  + \sigma(T-t, Z_t^{\phi}) \rmd W_t,
\end{align*}
and $Z_0 \sim p(T, \rmd x_T | 0, x_0) $. Note that $(Z_t^{\phi})_{t \in [0,T]}$ is associated with $\mathbb{Q}_\phi^{x_0}$.
Let $\varepsilon > 0$ and define $\mathrm{proj}_\varepsilon : \mathcal{C}([0,T], \mathbb{R}^d) \rightarrow \mathcal{C}([0,T], \mathbb{R}^d)$ by $\mathrm{proj}_\varepsilon(\omega)(t) = \omega(t \wedge (T - \varepsilon))$ for $t \in [0,T]$. Then, for any $\varepsilon > 0$, $\mathrm{proj}_\varepsilon(Z^n) \rightarrow \mathrm{proj}_\varepsilon(Z^\phi)$ uniformly over $[0,T]$ almost surely. Therefore, using  \citet[Lemma 12]{chen2022sampling}, we have  $(\mathrm{proj}_\varepsilon)_\# \mathbb{Q}^n_\phi \rightarrow (\mathrm{proj}_\varepsilon)_\# \mathbb{Q}^{x_0}_\phi$. Using \citet[Lemma 9.4.3]{ambrosio2005gradient}, the data processing inequality, see Lemma~\ref{prop:data_processing}, and \eqref{eq:KLQPnbound}, we get 
\begin{align}
    \textsc{KL}((\mathrm{proj}_\varepsilon)_\# \mathbb{P}^{x_0}|(\mathrm{proj}_\varepsilon)_\# \mathbb{Q}^{x_0}_\phi) & \leq \liminf_{n \rightarrow \infty} \textsc{KL}((\mathrm{proj}_\varepsilon)_\# \mathbb{P}^{x_0} | (\mathrm{proj}_\varepsilon)_\# \mathbb{Q}^n_\phi) \nonumber \\
    & \leq \liminf_{n \rightarrow \infty} \textsc{KL}(\mathbb{P}^{x_0} | \mathbb{Q}^n_\phi)  \leq e(\phi) < +\infty . \label{eq:process_approx_n}
\end{align}
Finally, letting $\varepsilon \rightarrow 0$ we have that $\mathrm{proj}_\varepsilon(\omega) \rightarrow \omega$ uniformly on $[0,T]$ \citep[Lemma 13]{chen2022sampling}, and hence using \citet[Corollary 9.4.6]{ambrosio2005gradient}, we get that $\textsc{KL}((\mathrm{proj}_\varepsilon)_\# \mathbb{P}^{x_0} | (\mathrm{proj}_\varepsilon)_\# \mathbb{Q}^{x_0}_\phi) \rightarrow \textsc{KL}(\mathbb{P}^{x_0} | \mathbb{Q}^{x_0}_\phi)$. Hence $\textsc{KL}(\mathbb{P}^{x_0}|\mathbb{Q}_{\phi}^{x_0})<\infty$ under the assumption that $e(\phi)<\infty$. 
\end{proof}

\begin{proof}[Proof of Proposition~\ref{prop:denoise}]
By expanding the square in \eqref{eq:e_phi_def_supp}, we can decompose the upper bound on the Kullback--Leibler divergence as 
\begin{align}\label{eqn:KL_decomp}
	\textsc{kl}(\mathbb{P}^{x_0}|\mathbb{Q}_{\phi}^{x_0}) \leq C_1 + C_2 -  C_3,
\end{align}
where 
\begin{align}
    C_1 &= \frac{1}{2}\int_0^T\left\|s(t,x_t)\right\|_{\Sigma(t,x_t)}^2p(t,x_t\mid 0,x_0)\rmd x_t\rmd t,\notag\\
    C_2 &= \frac{1}{2}\sum_{m=1}^M\int_{t_{m-1}}^{t_m}\int_{\mathbb{R}^d} \| s_{\phi}(t,x_t)\|_{\Sigma(t,x_t)}^2 p(t,x_t\mid 0,x_0)\rmd x_t\rmd t,\label{eqn:D2}\\
    C_3 &= \sum_{m=1}^M\int_{t_{m-1}}^{t_m}\int_{\mathbb{R}^d} \left\langle s_{\phi}(t,x_t), s(t,x_t)\right\rangle_{\Sigma(t,x_t)} p(t,x_t\mid 0,x_0)\rmd x_t \rmd t.\notag
\end{align}
We examine the term $C_3$ that depends on the unknown score function $s(t,x_t)$. Firstly, we can write 
\begin{align*}
	C_3 = \sum_{m=1}^M\int_{t_{m-1}}^{t_m}\int_{\mathbb{R}^d} \left\langle s_{\phi}(t,x_t), \nabla p(t,x_t\mid 0,x_0) \right\rangle_{\Sigma(t,x_t)} \rmd x_t\rmd t.
\end{align*}
By differentiating the Chapman--Kolmogorov equation with respect to the variable $x_t$
\begin{align*}
	\nabla p(t,x_t\mid 0,x_0) &= \nabla \int_{\mathbb{R}^d} p(t,x_t\mid t_{m-1},x_{t_{m-1}}) p(t_{m-1},x_{t_{m-1}}\mid 0,x_0)\rmd x_{t_{m-1}}\\
	&= \int_{\mathbb{R}^d} \nabla\log p(t,x_t\mid t_{m-1},x_{t_{m-1}}) p(t,x_t\mid t_{m-1},x_{t_{m-1}}) p(t_{m-1},x_{t_{m-1}}\mid 0,x_0) \rmd x_{t_{m-1}},\notag	
\end{align*}
we obtain
\begin{align}\label{eqn:D3}
	C_3 = \sum_{m=1}^M\int_{t_{m-1}}^{t_m}{E}^{x_0}\left\lbrace \left\langle s_{\phi}(t,X_t), \nabla\log p(t,X_t\mid t_{m-1},X_{t_{m-1}}) \right\rangle_{\Sigma(t,X_t)} \right\rbrace  \rmd t.
\end{align}
By expanding the square in \eqref{eqn:loss} and using \eqref{eqn:D2}, \eqref{eqn:D3}, and 
\begin{align*}
    C_4 &= \frac{1}{2}\sum_{m=1}^M\int_{t_{m-1}}^{t_m} {E}^{x_0}\left\lbrace\|g(t_{m-1},X_{t_{m-1}},t,X_t)\|_{\Sigma(t,X_t)}^2\right\rbrace\rmd t,
\end{align*}
we have 
\begin{align*}
	L(\phi) = C_2 - C_3 + C_4.
\end{align*}
The claim follows by noting the decomposition in \eqref{eqn:KL_decomp} and taking $C=C_1-C_4$.
\end{proof}

\section{Diffusion bridges}\label{sec:supp_diffusion_bridge}
\subsection{Time-reversed bridge process}
Here we provide an alternative way to establish that the time-reversed bridge process $(Z_t^{\star})_{t\in[0,T]}=(X_{T-t}^{\star})_{t\in[0,T]}$ evolves according to the time-reversal of the original diffusion process $(X_t)_{t\in[0,T]}$ in  \eqref{eqn:SDE} with initialization at $Z_0^{\star}=x_T$. For any $M\geq 3$, let $0=t_0<t_1<\cdots<t_M=T$ denote a partition of the interval $[0,T]$. The finite-dimensional distribution of $(X_{t_m}^{\star})_{m=1}^{M-1}$ is equal to that of $(X_{t_m})_{m=1}^{M-1}$ conditioned on $X_0=x_0$ and $X_T=x_T$, which can be written as
\begin{align}
    P^{x_0,x_T}(\rmd x_{t_1},\ldots,\rmd x_{t_{M-1}}) 
    = P^{x_0,x_T}(\rmd x_{t_{M-1}}) \prod_{m=1}^{M-2} P^{x_0,x_T}(\rmd x_{t_m} \mid x_{t_{m+1}},\ldots, x_{t_{M-1}}).
\end{align}
We have 
\begin{align}
    P^{x_0,x_T}(\rmd x_{t_{M-1}}) = 
    \frac{p(T, x_T \mid t_{M-1}, x_{t_{M-1}}) p(t_{M-1}, x_{t_{M-1}} \mid 0, x_0) \rmd x_{t_{M-1}} }
    {p(T,x_T\mid 0, x_0)},
\end{align}
and for each $m\in\{1,\ldots,M-2\}$
\begin{align}
    P^{x_0,x_T}(\rmd x_{t_m} \mid x_{t_{m+1}},\ldots, x_{t_{M-1}}) 
    = \frac{p(t_{m+1}, x_{t_{m+1}} \mid t_m, x_{t_m}) p(t_m, x_{t_m}\mid 0, x_0) \rmd x_{t_m}}{p(t_{m+1}, x_{t_{m+1}}\mid 0, x_0)},
\end{align}
which are indeed the transition kernels of the time-reversed process $(X_{t_{M-m}})_{m=1}^{M-1}$.

\subsection{Learning Doob's $h$-transform}
Suppose we have found a minimizer $\hat{\phi}\in\arg\min_{\phi\in\Phi}L(\phi)$ satisfying $e(\hat{\phi})<\infty$ and denote the corresponding score approximation as 
$\hat{s}(t,x_t)=s_{\hat{\phi}}(t,x_t)$ and drift function as $\hat{b}(t,z_t)=b_{\hat{\phi}}(t,z_t)$. 
Consider a time-reversed bridge process $\hat{Z}=(\hat{Z}_t)_{t\in[0,T]}$ satisfying
\begin{align}\label{eqn:approx_reverse}
	\rmd \hat{Z}_t = \hat{b}(t,\hat{Z}_t)\rmd t + \sigma(T-t,\hat{Z}_t) \rmd B_t, \quad \hat{Z}_0 = x_T,
\end{align}
which should be seen as an approximation of \eqref{eqn:reversal_bridge}. 
Let $\hat{q}(t,z_t\mid s,z_s)$ denote its transition density for any $0\leq s<t\leq T$, $\hat{\mathbb{Q}}^{x_0,x_T}$ be the induced path measure, and $\hat{E}^{x_0,x_T}$ to denote expectation with respect to $\hat{\mathbb{Q}}^{x_0,x_T}$. Note that $\hat{p}^{\star}(t,x_t)=\hat{q}(T-t,x_t\mid 0,x_T)$ is an approximation of the marginal density $p^{\star}(t,x_t)$ in \eqref{eqn:marginal_bridge} for each $t\in(0,T)$.  

Our discussion in \eqref{eqn:recover_htransform} prompts having the time-reversal of \eqref{eqn:approx_reverse} as an approximation of the Doob's $h$-transform process $X^{\star}$ in \eqref{eqn:doob_SDE}. The bridge process $\hat{X}=(\hat{X}_t)_{t\in[0,T]}=(\hat{Z}_{T-t})_{t\in[0,T]}$ satisfies 
\begin{align*}
	\rmd \hat{X}_t = \hat{f}(t,\hat{X}_t)\rmd t + \sigma(t,\hat{X}_t)\rmd W_t,\quad \hat{X}_0 = x_0,
\end{align*}
with drift function $\hat{f}(t,x_t) = -\hat{b}(T-t,x_t) + \Sigma(t,x_t)\hat{s}^{\star}(t,x_t) + \nabla\cdot\Sigma(t,x_t)$ which
is to be seen as an approximation of $f^{\star}(t,x_t)$. 
We can approximate the score $\hat{s}^{\star}(t,x_t)=\nabla\log \hat{p}^{\star}(t,x_t)$ of the marginal density $\hat{p}^{\star}(t,x_t)$ and hence the time-reversal of \eqref{eqn:approx_reverse} using the methodology described in Section \ref{sec:learn_reversal}. The following summarizes the key elements involved.

Consider a path measure $\hat{\mathbb{Q}}_{\phi}^{x_0,x_T}$ that is induced by the bridge process $\hat{X}=(\hat{X}_t)_{t\in[0,T]}$ satisfying 
\begin{align}\label{eqn:proposal_bridge_process}
	\rmd \hat{X}_t = \hat{f}_{\phi}(t,\hat{X}_t)\rmd t + \sigma(t,\hat{X}_t)\rmd W_t,\quad \hat{X}_0=x_0,
\end{align}
with drift function $\hat{f}_{\phi}(t,x_t) = -\hat{b}(T-t,x_t) + \Sigma(t,x_t)\hat{s}^{\star}_{\phi}(t,x_t) + \nabla\cdot\Sigma(t,x_t)$, where $\hat{s}_{\phi}^{\star}:[0,T]\times\mathbb{R}^d\rightarrow\mathbb{R}^d$ denotes a function approximation of the score $\hat{s}^{\star}(t,x_t)$ with parameters $\phi\in\Phi$ to be optimized. We now measure the score approximation error as 
\begin{align*}
    \hat{e}(\phi) = \hat{E}^{x_0,x_T}\left\lbrace\int_0^T\left\|\hat{s}_{\phi}^{\star}(t,\hat{X}_t)-\hat{s}^{\star}(t,\hat{X}_t)\right\|_{\Sigma(t,\hat{X}_t)}^2\rmd t\right\rbrace.
\end{align*}

\begin{proposition}
\label{prop:forward_KL}
Assuming $\hat{e}(\phi)<\infty$, we have  $\textsc{kl}(\hat{\mathbb{Q}}^{x_0,x_T}|\hat{\mathbb{Q}}_{\phi}^{x_0,x_T}) \leq \hat{e}(\phi) / 2$ and $\hat{X}_T=x_T$ holds $\hat{\mathbb{Q}}_{\phi}^{x_0,x_T}$-almost surely. 
\end{proposition}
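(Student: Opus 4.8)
The plan is to mirror the proof of Proposition~\ref{prop:backward_KL}, simply exchanging the roles played by the forward and backward processes. First I would record that, by the discussion preceding the proposition, the reference path measure $\hat{\mathbb{Q}}^{x_0,x_T}$ is induced both by the time-reversed bridge $\hat{Z}$ of \eqref{eqn:approx_reverse} and by its time-reversal $\hat{X}=(\hat{Z}_{T-t})_{t\in[0,T]}$, which solves $\rmd\hat{X}_t=\hat{f}(t,\hat{X}_t)\rmd t+\sigma(t,\hat{X}_t)\rmd W_t$ with $\hat{X}_0=x_0$ and $\hat{f}(t,x_t)=-\hat{b}(T-t,x_t)+\Sigma(t,x_t)\hat{s}^{\star}(t,x_t)+\nabla\cdot\Sigma(t,x_t)$. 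Since the proposal SDE \eqref{eqn:proposal_bridge_process} differs from this one only through the drift, with $\hat{f}(t,x_t)-\hat{f}_{\phi}(t,x_t)=\Sigma(t,x_t)\{\hat{s}^{\star}(t,x_t)-\hat{s}^{\star}_{\phi}(t,x_t)\}$ and the same diffusion coefficient $\sigma(t,x_t)$, Girsanov's theorem applies exactly as in the proof of Proposition~\ref{prop:backward_KL}.

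Next I would write $\rmd\hat{\mathbb{Q}}^{x_0,x_T}/\rmd\hat{\mathbb{Q}}_{\phi}^{x_0,x_T}(\hat{X})$ as the usual stochastic exponential, with exponent $-\tfrac12\int_0^T\|(\hat{f}-\hat{f}_{\phi})(t,\hat{X}_t)\|_{\Sigma^{-1}(t,\hat{X}_t)}^2\rmd t$ plus an It\^{o} integral against $\rmd W_t$. Taking logarithms, using that the It\^{o} integral has zero expectation under $\hat{\mathbb{Q}}^{x_0,x_T}$, and applying the identity $\|Ax\|_{A^{-1}}^2=\|x\|_{A}^2$ with $A=\Sigma(t,\hat{X}_t)$ and $x=\hat{s}^{\star}(t,\hat{X}_t)-\hat{s}^{\star}_{\phi}(t,\hat{X}_t)$, I obtain $\textsc{kl}(\hat{\mathbb{Q}}^{x_0,x_T}|\hat{\mathbb{Q}}_{\phi}^{x_0,x_T})=\tfrac12\hat{E}^{x_0,x_T}\{\int_0^T\|\hat{s}^{\star}_{\phi}(t,\hat{X}_t)-\hat{s}^{\star}(t,\hat{X}_t)\|_{\Sigma(t,\hat{X}_t)}^2\rmd t\}=\hat{e}(\phi)/2$. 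For the second assertion, finiteness of $\hat{e}(\phi)$ forces $\textsc{kl}(\hat{\mathbb{Q}}^{x_0,x_T}|\hat{\mathbb{Q}}_{\phi}^{x_0,x_T})<\infty$, hence $\hat{\mathbb{Q}}^{x_0,x_T}\ll\hat{\mathbb{Q}}_{\phi}^{x_0,x_T}$; since $\hat{X}_T=\hat{Z}_0=x_T$ holds $\hat{\mathbb{Q}}^{x_0,x_T}$-almost surely by construction of \eqref{eqn:approx_reverse}, the event $\{\hat{X}_T=x_T\}$ also has full measure under $\hat{\mathbb{Q}}_{\phi}^{x_0,x_T}$.

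The main obstacle is the same technical point that is tacitly dispatched in Proposition~\ref{prop:backward_KL}: making the It\^{o}-calculus steps rigorous. Concretely one must check that the drift difference is integrable enough for Girsanov's theorem to yield a bona fide change of measure (so that the stochastic exponential has expectation one and the It\^{o} integral is a genuine martingale under $\hat{\mathbb{Q}}^{x_0,x_T}$), and that $\hat{X}$ really does admit the forward representation with drift $\hat{f}$ — this last fact rests on a time-reversal statement in the spirit of \citet{haussmann1986time} and \citet{millet1989integration}, now applied to the \emph{approximate} process $\hat{Z}$, and therefore needs mild regularity of $\hat{s}=s_{\hat{\phi}}$ and of the marginal densities $\hat{p}^{\star}$. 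Under the standing assumption $\hat{e}(\phi)<\infty$ together with the regularity already imposed on $f$, $\sigma$ and the function approximators, these conditions are met and the computation above carries over verbatim.
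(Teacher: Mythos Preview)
Your proposal is correct and matches the paper's intended approach: the paper explicitly omits the proof of Proposition~\ref{prop:forward_KL}, stating only that it ``is similar to Appendix~\ref{sec:proofs_backward} and is thus omitted,'' i.e.\ one repeats the Girsanov computation from the proof of Proposition~\ref{prop:backward_KL} with the roles of forward and backward processes swapped. Your identification of the drift difference $\hat{f}-\hat{f}_{\phi}=\Sigma(\hat{s}^{\star}-\hat{s}^{\star}_{\phi})$, the use of $\|Ax\|_{A^{-1}}^2=\|x\|_A^2$, and the absolute-continuity argument for the terminal constraint all reproduce the paper's reasoning verbatim.
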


\begin{proposition}\label{prop:denoise_hat}
For any partition $(t_m)_{m=0}^M$ of the interval $[0,T]$, 
we have $\textsc{kl}(\hat{\mathbb{Q}}^{x_0,x_T}|\hat{\mathbb{Q}}_{\phi}^{x_0,x_T}) \leq \hat{L}(\phi)+\hat{C}$ if $\hat{e}(\phi)<\infty$, where $\hat{C}$ is a constant independent of $\phi\in\Phi$, the loss function is defined as 
\begin{align}\label{eqn:loss_hat}
	\hat{L}(\phi)= \frac{1}{2}\sum_{m=1}^M\int_{t_{m-1}}^{t_m} \hat{E}^{x_0,x_T}\left\lbrace\left\|\hat{s}_{\phi}^{\star}(T-t,\hat{Z}_t) - \hat{g}(t_{m-1},\hat{Z}_{t_{m-1}},t,\hat{Z}_t) \right\|_{\Sigma(t,\hat{Z}_t)}^2\right\rbrace\rmd t,
\end{align}
and $\hat{g}(s,z_s,t,z_t)=\nabla\log \hat{q}(t,z_t\mid s,z_s)$ for $0\leq s<t\leq T$ and $z_s,z_t\in\mathbb{R}^d$. 
\end{proposition}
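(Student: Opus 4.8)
Proposition~\ref{prop:denoise_hat} is the analogue of Proposition~\ref{prop:denoise} obtained by replaying that proof with the backward diffusion $\hat{Z}$ of \eqref{eqn:approx_reverse} in the role of $X$, and with Proposition~\ref{prop:forward_KL} supplying the initial identity in place of Proposition~\ref{prop:backward_KL}. The only genuinely new observation is that $\hat{s}^{\star}(T-\cdot\,,\cdot)$ is exactly the marginal score of $\hat{Z}$: since $\hat{Z}$ is started deterministically at $\hat{Z}_0=x_T$, its time-$t$ marginal density is $\hat{q}(t,\cdot\mid 0,x_T)$, and the definition $\hat{p}^{\star}(t,x)=\hat{q}(T-t,x\mid 0,x_T)$ gives, on differentiating in the spatial variable, $\hat{s}^{\star}(T-t,z)=\nabla_z\log\hat{q}(t,z\mid 0,x_T)$. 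Consequently the transition-density score that enters the denoising step is that of $\hat{Z}$, which is the $\hat{g}$ in the statement and, crucially, is well defined because the drift $\hat{b}=b_{\hat{\phi}}$ of $\hat{Z}$ is a fixed and known function.

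\textbf{Steps.} First, Proposition~\ref{prop:forward_KL} gives $\textsc{kl}(\hat{\mathbb{Q}}^{x_0,x_T}|\hat{\mathbb{Q}}_{\phi}^{x_0,x_T})=\hat{e}(\phi)/2$; substituting $t\mapsto T-t$ in the integral defining $\hat{e}(\phi)$ and using $\hat{X}_t=\hat{Z}_{T-t}$ recasts this as
\[
\textsc{kl}(\hat{\mathbb{Q}}^{x_0,x_T}|\hat{\mathbb{Q}}_{\phi}^{x_0,x_T})=\frac{1}{2}\hat{E}^{x_0,x_T}\left\{\int_0^T\left\|\hat{s}_{\phi}^{\star}(T-t,\hat{Z}_t)-\hat{s}^{\star}(T-t,\hat{Z}_t)\right\|_{\Sigma(T-t,\hat{Z}_t)}^2\rmd t\right\}.
\]
Second, expanding the weighted square decomposes this as $\hat{C}_1+\hat{C}_2-\hat{C}_3$, where $\hat{C}_1$ carries $\|\hat{s}^{\star}(T-t,\hat{Z}_t)\|^2$, the term $\hat{C}_2=\frac12\sum_{m=1}^M\int_{t_{m-1}}^{t_m}\hat{E}^{x_0,x_T}\{\|\hat{s}^{\star}_{\phi}(T-t,\hat{Z}_t)\|^2_{\Sigma(T-t,\hat{Z}_t)}\}\rmd t$ splits the $\hat{s}^{\star}_{\phi}$ term over the partition, and $\hat{C}_3$ is the cross term. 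Third, I would treat $\hat{C}_3$ exactly as in the proof of Proposition~\ref{prop:denoise}: using $\hat{s}^{\star}(T-t,z_t)\hat{q}(t,z_t\mid 0,x_T)=\nabla\hat{q}(t,z_t\mid 0,x_T)$ and differentiating the Chapman--Kolmogorov identity
\[
\hat{q}(t,z_t\mid 0,x_T)=\int_{\mathbb{R}^d}\hat{q}(t,z_t\mid t_{m-1},z_{t_{m-1}})\,\hat{q}(t_{m-1},z_{t_{m-1}}\mid 0,x_T)\,\rmd z_{t_{m-1}}
\]
in $z_t$ (valid since $t_{m-1}<t$ on $[t_{m-1},t_m]$, via $\nabla\hat{q}=(\nabla\log\hat{q})\hat{q}$ inside the integral), one obtains $\hat{C}_3=\sum_{m=1}^M\int_{t_{m-1}}^{t_m}\hat{E}^{x_0,x_T}\{\langle\hat{s}^{\star}_{\phi}(T-t,\hat{Z}_t),\hat{g}(t_{m-1},\hat{Z}_{t_{m-1}},t,\hat{Z}_t)\rangle_{\Sigma(T-t,\hat{Z}_t)}\}\rmd t$, the analogue of \eqref{eqn:D3}. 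Fourth, expanding the square in \eqref{eqn:loss_hat} and setting $\hat{C}_4=\frac12\sum_{m=1}^M\int_{t_{m-1}}^{t_m}\hat{E}^{x_0,x_T}\{\|\hat{g}(t_{m-1},\hat{Z}_{t_{m-1}},t,\hat{Z}_t)\|^2_{\Sigma(T-t,\hat{Z}_t)}\}\rmd t$ gives $\hat{L}(\phi)=\hat{C}_2-\hat{C}_3+\hat{C}_4$; comparing with $\textsc{kl}=\hat{C}_1+\hat{C}_2-\hat{C}_3$ yields the claim with $\hat{C}=\hat{C}_1-\hat{C}_4$, which does not involve $\phi$. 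The almost-sure terminal statement $\hat{X}_T=x_T$ is already part of Proposition~\ref{prop:forward_KL}.

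\textbf{Main obstacle.} There is no essential difficulty beyond bookkeeping, and the places I expect to need the most care are: (i) keeping $\hat{Z}$ and $\hat{X}$ apart, so that the Chapman--Kolmogorov expansion is carried out with the transition density $\hat{q}$ of $\hat{Z}$ --- well defined since $\hat{b}$ is known --- and not with that of $\hat{X}$, whose drift $\hat{f}$ still contains the unknown score $\hat{s}^{\star}$; and (ii) tracking the time argument of $\Sigma$ through the substitution $t\mapsto T-t$. As in Appendix~\ref{sec:proofs_backward}, differentiating under the Chapman--Kolmogorov integral relies on the standing regularity of $\sigma$ (hence of $\hat{q}$), and the term-by-term decomposition is to be read at the same formal level as in the proof of Proposition~\ref{prop:denoise}: since $\hat{q}(t,\cdot\mid 0,x_T)\to\delta_{x_T}$ as $t\downarrow 0$, the score $\hat{s}^{\star}(T-t,\cdot)$ is singular there and the constants $\hat{C}_1,\hat{C}_4$ may individually diverge while their difference $\hat{C}=\hat{C}_1-\hat{C}_4$ stays finite; a fully rigorous version would run the identity on $[\varepsilon,T-\varepsilon]$ and let $\varepsilon\downarrow 0$.
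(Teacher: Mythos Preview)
Your proposal is correct and follows exactly the route the paper intends: the paper omits the proof of Proposition~\ref{prop:denoise_hat}, stating only that it is ``similar to Appendix~\ref{sec:proofs_backward}'', and your argument is precisely that analogue --- Girsanov via Proposition~\ref{prop:forward_KL}, expansion into $\hat{C}_1+\hat{C}_2-\hat{C}_3$, Chapman--Kolmogorov on $\hat{q}$ for $\hat{C}_3$, and the final bookkeeping $\hat{C}=\hat{C}_1-\hat{C}_4$. Your care with the time argument of $\Sigma$ is well placed: after the substitution $t\mapsto T-t$ the weight is $\Sigma(T-t,\hat{Z}_t)$, consistent with the diffusion coefficient of $\hat{Z}$ in \eqref{eqn:approx_reverse}, so the $\Sigma(t,\hat{Z}_t)$ in the stated \eqref{eqn:loss_hat} appears to be a typographical slip in the paper rather than an error on your part.
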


The proof of these results is similar to Section \ref{sec:proofs_backward} of the Supplementary Material and is thus omitted. As before, this allows us to circumvent intractability in the Kullback--Leibler divergence $\textsc{kl}(\hat{\mathbb{Q}}^{x_0,x_T}|\hat{\mathbb{Q}}_{\phi}^{x_0,x_T})$ by minimizing the loss function $\hat{L}(\phi)$. 
In the ideal case of $\hat{s}_{\phi}^{\star}(t,x_t)=\hat{s}^{\star}(t,x_t)$ $\hat{\mathbb{Q}}^{x_0,x_T}$-almost surely, the minimal loss of $\hat{L}(\phi)=-\hat{C}$ is also unknown in practice, and $\hat{\mathbb{Q}}_{\phi}^{x_0,x_T}$ recovers the law of the diffusion bridge process $X^{\star}$ only if the initial score approximation error satisfies $e(\hat{\phi})=0$. 

Given a minimizer $\phi^{\circ}\in\arg\min_{\phi\in\Phi}\hat{L}(\phi)$ and the corresponding score approximation $s^{\circ}(t,x_t)=\hat{s}_{\phi^{\circ}}^{\star}(t,x_t)$, by rewriting the drift $f^{\circ}(t,x_t)=\hat{f}_{\phi^{\circ}}(t,x_t)$ as 
\begin{align*}
	f^{\circ}(t,x_t) = f(t,x_t) + \Sigma(t,x_t)\{s^{\circ}(t,x_t) - \hat{s}(t,x_t)\}
\end{align*}
and comparing it with \eqref{eqn:recover_htransform}, we see that the last two terms on the right provide an approximation of the term $\Sigma(t,x_t)\nabla\log h(t,x_t)$ in Doob's $h$-transform.

\section{Numerical implementation}\label{sec:numerical_implement}
In this section, we detail various numerical considerations to implement our proposed methodology. 
For simplicity, we employ the Euler--Maruyama scheme \citep[p. 340]{kloeden1992} on a uniform discretization of the interval $[0,T]$, denoted by $0=t_0<t_1<\cdots<t_M=T$, with stepsize $\delta t = T/M$ for $M>1$. 
Non-uniform discretizations involve only minor modifications; some higher-order schemes could also be considered. 
In the following, we denote a multivariate normal distribution with mean vector $\mu\in\mathbb{R}^d$ and covariance matrix $\Sigma\in\mathbb{R}^{d\times d}$ as $\mathcal{N}(\mu,\Sigma)$, and its density as $x\mapsto\mathcal{N}(x;\mu,\Sigma)$. We write the zero vector as $0_d\in\mathbb{R}^d$ and the identity matrix as $I_d\in\mathbb{R}^{d\times d}$. 

The time-discretization of the stochastic differential equation defining $X$ satisfies the following recursion
\begin{align}\label{eqn:euler}
    X_{t_m} = X_{t_{m-1}} + \delta t f(t_{m-1},X_{t_{m-1}}) + 
    \sigma(t_{m-1},X_{t_{m-1}}) (W_{t_m} - W_{t_{m-1}}),\quad X_0=x_0,
\end{align}
for $m\in\{1,\ldots,M\}$, with independent Brownian increments $W_{t_m} - W_{t_{m-1}}\sim\mathcal{N}(0_d,\delta t I_d)$.
Equation \eqref{eqn:euler} induces a normal approximation of the transition density $p(t_m,x_{t_m}\mid t_{m-1},x_{t_{m-1}})$ of the form
\begin{align*}
    p_M(t_m,x_{t_m}\mid t_{m-1},x_{t_{m-1}})=\mathcal{N}\{x_{t_m}; x_{t_{m-1}} + \delta t f(t_{m-1},x_{t_{m-1}}), \delta t\Sigma(t_{m-1},x_{t_{m-1}})\}.
\end{align*}
By replacing the score of $p(t_m,x_{t_m}\mid t_{m-1},x_{t_{m-1}})$ with that of $p_M(t_m,x_{t_m}\mid t_{m-1},x_{t_{m-1}})$, $g(t_{m-1},x_{t_{m-1}},t_m,x_{t_m})$ in Proposition \ref{prop:denoise} can be approximated by 
\begin{align*}
    g_M(t_{m-1},x_{t_{m-1}},t_m,x_{t_m}) 
    = -(\delta t)^{-1}\Sigma^{-1}(t_{m-1},x_{t_{m-1}})\{x_{t_m} - x_{t_{m-1}} - \delta t f(t_{m-1},x_{t_{m-1}})\}.\notag
\end{align*}
This approximation is of order $(\delta t)^{-1/2}$ as $\delta t\rightarrow 0$. 
One could consider a control variate approach to stabilize the approximation as discussed in \cite{song2021train}. While one could also consider higher-order discretization schemes to approximate the function $g$, it may not be feasible or worthwhile to compute higher-order derivatives of $f$ and $\sigma$, particularly if the time-discretization error is dominated by the neural network approximation error.  
 
We then define the following approximation of the loss function $L(\phi)$ in \eqref{eqn:loss} 
\begin{equation*}
	L_M(\phi) = \frac{1}{2}\delta t\sum_{m=1}^M {E}_M^{x_0}\left\lbrace\left\|s_{\phi}(t_m,X_{t_m}) - g_M(t_{m-1},X_{t_{m-1}},t_m,X_{t_m}) \right\|_{\Sigma(t_m,X_{t_m})}^2\right\rbrace,
\end{equation*}
where ${E}_M^{x_0}$ denotes expectation with respect to the law of the time-discretized process under \eqref{eqn:euler}. 
To obtain a minimizer $\hat{\phi}\in\arg\min_{\phi\in\Phi}L_M(\phi)$ using stochastic gradient algorithms, the gradient with respect to parameters $\phi\in\Phi$ 
\begin{align}\label{eqn:gradient_parameters_loss}
    &\nabla_{\phi}L_M(\phi) = \delta t\sum_{m=1}^M E_M^{x_0}\left[(\nabla_{\phi}s_{\phi}^\T \Sigma)(t_m,X_{t_m})
    \{s_{\phi}(t_m,X_{t_m}) - g_M(t_{m-1},X_{t_{m-1}},t_m,X_{t_m})\}\right]
\end{align}
can be unbiasedly estimated using independent sample paths from \eqref{eqn:euler}. 
The above notation $\nabla_{\phi}s_{\phi}$ refers to the Jacobian of $s_{\phi}$. 
Equation \eqref{eqn:gradient_parameters_loss} can be seen as an approximation of the gradient $\nabla_{\phi}{L}(\phi)$.

After obtaining $\hat{\phi}$ with optimization, we can simulate a proposal bridge $\hat{Z}=(\hat{Z}_t)_{t\in[0,T]}$ satisfying \eqref{eqn:approx_reverse} with drift function $\hat{b}(t,z_t)=b_{\hat{\phi}}(t,z_t)$. We employ the following modified Euler--Maruyama scheme 
\begin{align}\label{eqn:discretized_backward_process}
    \hat{Z}_{t_m} = \hat{Z}_{t_{m-1}} + \delta t \hat{b}(t_{m-1},\hat{Z}_{t_{m-1}}) + \left(\frac{T-t_m}{T-t_{m-1}}\right)^{1/2}\sigma(T-t_{m-1},\hat{Z}_{t_{m-1}}) (B_{t_m} - B_{t_{m-1}}),
\end{align}
for $m\in\{1,\ldots,M-1\}$, with independent Brownian increments $B_{t_m} - B_{t_{m-1}}\sim\mathcal{N}(0_d,\delta t I_d)$, initial condition $\hat{Z}_0=x_T$, and terminal constraint $\hat{Z}_T=x_0$. This changes the variance of the usual Euler--Maruyama transitions with a multiplier of $(T-t_m)/(T-t_{m-1})$ at time step $m$. We found that this modification can improve practical performance for times near the endpoint by lowering the transition variances. Such behaviour is consistent with findings in earlier works by \cite{durham2002numerical} and \cite{papaspiliopoulos2013data} when constructing proposal bridge processes with the drift of a Brownian bridge. This gives a normal approximation of the transition density $\hat{q}(t_m,z_{t_m}\mid t_{m-1},z_{t_{m-1}})$ 
\begin{align}\label{eqn:normal_backward_transition}
	&\hat{q}_M(t_m,z_{t_m}\mid t_{m-1},z_{t_{m-1}})\notag\\
	&=\mathcal{N}\left\lbrace z_{t_m}; z_{t_{m-1}} + \delta t \hat{b}(t_{m-1},z_{t_{m-1}}), \delta t\left(\frac{T-t_m}{T-t_{m-1}}\right)\Sigma(T-t_{m-1},z_{t_{m-1}})\right\rbrace.
\end{align}
We can perform importance sampling on $\mathbb{E}_M=(\mathbb{R}^d)^{M-1}$ to correct for the discrepancy between the law of our proposal bridge process 
\begin{align}\label{eqn:law_proposal_bridge}
	\hat{Q}_M^{x_0,x_T}(z_{t_1:t_{M-1}}) = \prod_{m=1}^{M-1}\hat{q}_M(t_m,z_{t_m}\mid t_{m-1},z_{t_{m-1}}),\quad z_{t_1:t_{M-1}}=(z_{t_m})_{m=1}^{M-1}\in\mathbb{E}_M,
\end{align}
and the law of the time-discretized diffusion bridge process 
\begin{align}\label{eqn:law_diffusion_bridge}
	P_M^{x_0,x_T}(x_{t_1:t_{M-1}}) = \frac{\gamma_M^{x_0,x_T}(x_{t_1:t_{M-1}})}{p_M(T,x_T\mid 0,x_0)},
	\quad x_{t_1:t_{M-1}}=(x_{t_m})_{m=1}^{M-1}\in\mathbb{E}_M,
\end{align}
with $\gamma_M^{x_0,x_T}(x_{t_1:t_{M-1}})=\prod_{m=1}^M p_M(t_m,x_{t_m}\mid t_{m-1},x_{t_{m-1}})$, and also estimate 
\begin{align}\label{eqn:discretized_transition}
	p_M(T,x_T\mid 0,x_0) = \int_{\mathbb{E}_M} \gamma_M^{x_0,x_T}(x_{t_1:t_{M-1}}) \rmd x_{t_1:t_{M-1}},	
\end{align}
which is an approximation of the transition density $p(T,x_T\mid 0,x_0)$ under the Euler--Maruyama scheme. 
The corresponding unnormalized importance weight is $\omega(z_{t_1:t_{M-1}})=\gamma_M^{x_0,x_T}(x_{t_1:t_{M-1}})/\hat{Q}_M^{x_0,x_T}(z_{t_1:t_{M-1}})$ with $x_{t_1:t_{M-1}}=(z_{T-t_m})_{m=1}^{M-1}$, and an unbiased importance sampling estimator of the transition density $p_M(T,x_T\mid 0,x_0)$ is $N^{-1}\sum_{n=1}^N\omega(z_{t_1:t_{M-1}}^n)$ where $(z_{t_1:t_{M-1}}^n)_{n=1}^N$ denote $N\in\mathbb{N}$ independent sample paths from $\hat{Q}_M^{x_0,x_T}$. 
As noted by \cite{lin2010generating}, the root mean squared error of this transition density estimator is approximately equals to the $\chi^2$-divergence of $\hat{Q}_M^{x_0,x_T}$ from $P_M^{x_0,x_T}$ divided by the sample size $N$.
One can also employ proposals from \eqref{eqn:law_proposal_bridge} within an independent Metropolis--Hastings algorithm that has \eqref{eqn:law_diffusion_bridge} as its invariant law  \citep{elerian2001likelihood}. At each iteration of the algorithm, a sample path $z_{t_1:t_{M-1}}^{\circ}\sim\hat{Q}_M^{x_0,x_T}$ is accepted with probability $\min\{1,\omega(z_{t_1:t_{M-1}}^{\circ})/\omega(z_{t_1:t_{M-1}})\}$, where $z_{t_1:t_{M-1}}$ denotes the current state of the Markov chain. The efficiency of this Markov chain Monte Carlo algorithm can be assessed by monitoring its acceptance probability.
To improve the acceptance probability, we can also combine independent Metropolis--Hastings with importance sampling within a particle independent Metropolis--Hastings algorithm \citep{andrieu2010particle} that has invariant law \eqref{eqn:law_diffusion_bridge}. Each iteration of this algorithm involves selecting a proposed sample path $z_{t_1:t_{M-1}}^{\circ}$ among $N$ candidates $(z_{t_1:t_{M-1}}^n)_{n=1}^N\sim\hat{Q}_M^{x_0,x_T}$ according to probabilities proportional to their weights $(\omega(z_{t_1:t_{M-1}}^n))_{n=1}^N$, and accepting it with probability $\min\{1,\hat{p}_M^{\circ}(T,x_T\mid 0,x_0)/\hat{p}_M(T,x_T\mid 0,x_0)\}$ that depends on the ratio of the new and current transition density estimators $\hat{p}_M^{\circ}(T,x_T\mid 0,x_0)=N^{-1}\sum_{n=1}^N\omega(z_{t_1:t_{M-1}}^n)$ and 
$\hat{p}_M(T,x_T\mid 0,x_0)$, respectively. Under mild assumptions, consistency of importance sampling estimators as $N\rightarrow\infty$ implies that the acceptance probability of particle independent Metropolis--Hastings algorithm converges to one. This algorithm can also be combined with unbiased Markov chain Monte Carlo methods to provide unbiased estimates of expectations with respect to the law of the time-discretized diffusion bridge \citep{middleton2019}. 

Lastly, we sketch the key steps to learn the Doob's $h$-transform process for the sake of brevity. Using the score of the normal transition density in \eqref{eqn:normal_backward_transition}, we may approximate $\hat{g}(t_{m-1},z_{t_{m-1}},t_m,z_{t_m})$ and hence the loss function $\hat{L}(\phi)$ in \eqref{eqn:loss_hat}. 
The approximate loss can be minimized using stochastic gradient algorithms and sample paths from \eqref{eqn:discretized_backward_process}. By time-discretizing the resulting proposal bridge process in \eqref{eqn:proposal_bridge_process}, we may then employ it as an importance proposal to approximate the law in \eqref{eqn:law_diffusion_bridge} and the transition density in \eqref{eqn:discretized_transition}, or to generate proposal distributions within independent Metropolis--Hastings algorithms.   

\section{Implementation details}\label{sec:implement_details}
\subsection{Benchmarking proposal bridge processes}
We benchmark our diffusion bridge approximations against several existing approaches to construct proposal bridge processes. These methods simulate a proposal bridge process $X^{\circ}=(X_t^{\circ})_{t\in[0,T]}$ satisfying 
\begin{align}\label{eqn:propsal_bridge}
    \rmd X_t^{\circ} = f^{\circ}(t,X_t^{\circ}) \rmd t + \sigma(t,X_t^{\circ})\rmd W_t, 
    \quad X_0^{\circ}=x_0.
\end{align}
As summarized in Table \ref{tab:benchmark}, each method can be understood as a specific choice of the drift function $f^{\circ}$ that approximates the diffusion bridge process $X^{\star}$ given by Doob's $h$-transform in \eqref{eqn:doob_SDE}. 
We time-discretize \eqref{eqn:propsal_bridge} using the Euler--Maruyama (EM) scheme \eqref{eqn:euler} for the forward diffusion method, the modified Euler--Maruyama (Modified EM) scheme \eqref{eqn:discretized_backward_process} for the modified diffusion bridge, the Euler--Maruyama scheme 
with the time-change $\tau(t)=t(2-t/T)$ proposed by \citet{van2017bayesian} (Time-change EM) for the Clark--Delyon--Hu bridge and the guided proposal bridge.

We perform an importance sampling or independent Metropolis--Hastings correction as described above, with the exception that for the Clark--Delyon--Hu bridge and the guided proposal bridge, the importance weight 
\begin{align*}
    \omega(X_{\tau(t_1):\tau(t_{M-1})}^\circ) = \frac{\tilde{p}(T,x_T\mid 0,x_0)}{p(T,x_T\mid 0,x_0)}\exp\left[\sum_{m=1}^M (f-\tilde{f})^\T(\nabla\log \tilde{h})\{\tau(t_m), X_{\tau(t_m)}^{\circ}\}\right]  
\end{align*}
of the sample path $(X_{\tau(t_m)}^\circ)_{m=1}^{M-1}$ is obtained by approximating the Radon--Nikodym derivative 
\begin{align}\label{eqn:radon_guiding}
    \Psi(X^\circ)= \frac{\tilde{p}(T,x_T\mid 0,x_0)}{p(T,x_T\mid 0,x_0)}\exp\left[\int_0^T (f - \tilde{f})^\T(\nabla\log \tilde{h})\{t, X_t^{\circ}\}\rmd t\right],
\end{align}
where $\tilde{f}(t,x_t)$ denotes the drift function of the associated auxiliary process. The numerical results in \citet[Section 5.2]{van2017bayesian} show improved time-discretization of $\Psi(X^\circ)$ using Euler--Maruyama with the time-change $\tau(t)=t(2-t/T)$. 
The Clark--Delyon--Hu bridge can be understood as having Brownian motion as the auxiliary process, in which case $\tilde{f}(t,x_t)=0$, while the guided proposal bridge typically involves selecting an auxiliary Ornstein--Uhlenbeck process with a linear drift $\tilde{f}(t,x_t)$ whose parameters are determined by minimizing a Kullback--Leibler objective \citep[Section 1.3]{schauer2017guided} or by understanding the behaviour of the diffusion process $X$ \citep[Section 4.4]{van2017bayesian}. We will detail the choice of these parameters for each example in the following.

\begin{table}
\footnotesize
\begin{tabular}{lcccc}
\\
\textbf{Method} & \textbf{Drift} $f^{\circ}(t,x_t)$ & \textbf{References} & \textbf{Time-discretization} \\ \hline 
Forward diffusion & $f(t,x_t)$ & \citet{pedersen1995consistency} & EM \\
Modified diffusion bridge & $(x_T-x_0)/(T-t)$ & \citet{durham2002numerical} & Modified EM \\
Clark--Delyon--Hu & $f(t,x_t)+(x_T-x_0)/(T-t)$ & \citet{clark1990simulation,delyon2006simulation} & Time-change EM \\
Guided proposal & $f(t,x_{t})+\Sigma(t,x_{t})\nabla\log\tilde{h}(t,x_{t})$ & \citet{schauer2017guided} & Time-change EM
\end{tabular}
\caption{Benchmark proposal bridge methods\label{tab:benchmark}}
\end{table}

\subsection{Neural network and stochastic optimization}\label{sec:nn}
The architecture of the neural networks we employed is illustrated in Fig. \ref{fig:net}. For all numerical experiments, optimization was performed using the stochastic gradient algorithm of \citet{kingma2014adam} with a momentum of $0.99$ and learning rate of $0.01$. 

\begin{figure}
\centering
\includegraphics[width=1\linewidth]{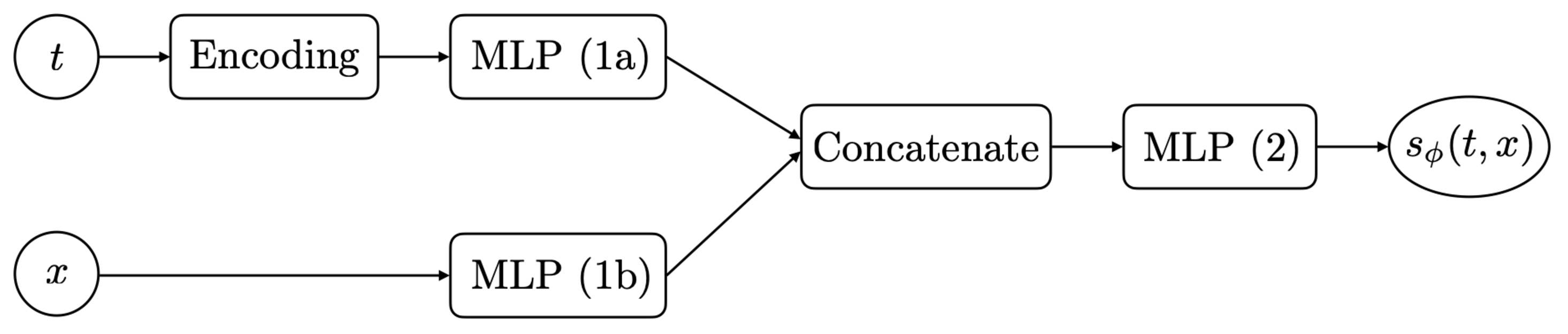}
\includegraphics[width=1\linewidth]{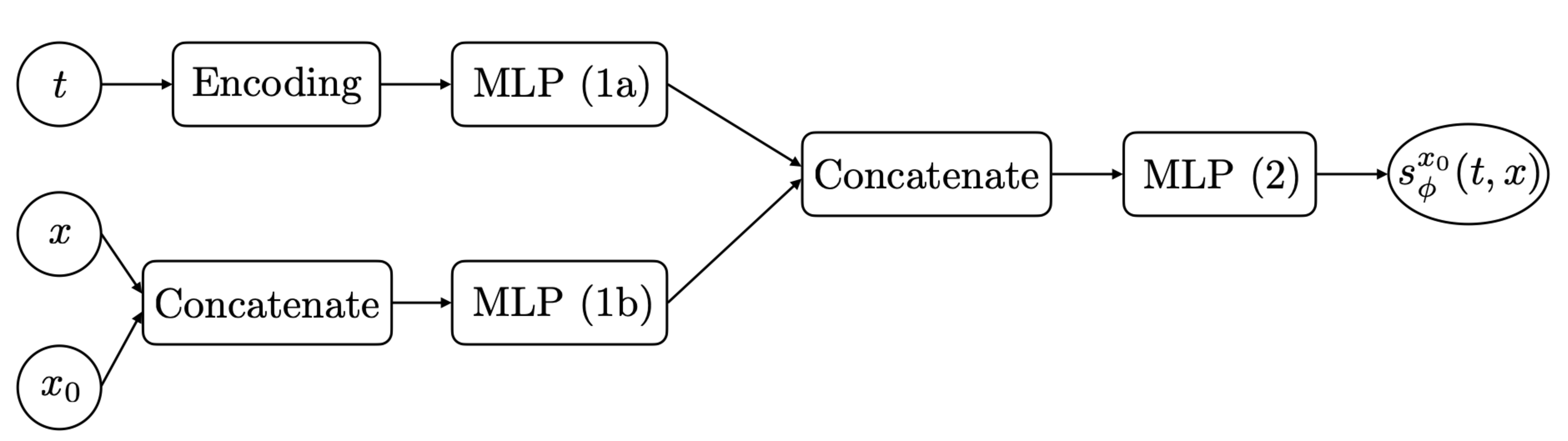}
\caption{Neural network architectures involve multi-layer perceptron (MLP) blocks and an ``Encoding'' block which applies the sine transform described in \citet{vaswani2017attention}. MLP (1a) and (1b) have one hidden layer and MLP (2) has two hidden layers. All neurons use the Leaky ReLU activation function.}
\label{fig:net}
\end{figure}

\subsection{Ornstein--Uhlenbeck process}
Let $X$ be an Ornstein--Uhlenbeck process, defined by \eqref{eqn:SDE} with linear drift function $f(t,x_t)= \alpha-\beta x_t$ and identity diffusion coefficient $\sigma(t,x_t) = I_d$. 
In this analytically tractable example, for any $0\leq s<t\leq T$, the transition density of $X$ is a normal density $p(t,x_t\mid s,x_s)=\mathcal{N}\{x_t;m(t-s,x_s),v(t-s)I_d\}$, with the following mean and variance 
\begin{align*}
	m(t-s,x_s) = \frac{\alpha}{\beta} + \left(x_s - \frac{\alpha}{\beta}\right)\exp\{-\beta (t-s)\},\quad v(t-s) =\frac{1-\exp\{-2\beta (t-s)\}}{2\beta}.
\end{align*}
Hence the logarithmic gradient term in the Doob's $h$-transform of \eqref{eqn:doob_SDE} is 
\begin{align*}
    \nabla \log h(t,x_t) = \frac{\exp\{-\beta(T-t)\}}{v(T-t)}\{x_T - m(T-t,x_t)\},
\end{align*} 
and the score of the transition density in \eqref{eqn:reversal_bridge} is
\begin{align*}
	s(t,x_t) = v(t)^{-1}\{m(t,x_0)-x_t\}.
\end{align*}

For this example, we can select the auxiliary process of \citet{schauer2017guided} as the Ornstein--Uhlenbeck process by setting $\tilde{f}(t,x_t)=\alpha-\beta x_t$, in which case the linear guiding term $\nabla \log \tilde{h}(t,x_t)=\nabla \log h(t,x_t)$ is exact, and the Radon--Nikodym derivative in \eqref{eqn:radon_guiding} satisfies $\Psi(X^\circ)=1$. 

For dimension $d=1$ and varying either the time horizon $T\in\{1,2,4,8\}$ or the terminal state $x_T$, we employed a time-discretization stepsize of $\delta t = 0.02$, $500$ optimization iterations, and $100$ sample paths per iteration. For the case of $T=1$ and varying $d\in\{1,2,4,8\}$, we decreased the stepsize and increased the number of optimization iterations and the capacity of the neural network with dimension.

\subsection{Interest rates model}
We consider a special case of an interest rates model in \citet{ait1998nonparametric}, defined by \eqref{eqn:SDE} with drift function $f(t,x_t)=\theta/x_t-x_t$ with $\theta=4$ and diffusion coefficient $\sigma(t,x_t)=1$. This diffusion has a stable stationary point at $x^{\star}=\sqrt{\theta}$, and its transition density is known and given by 
\begin{align*}
    \log p(t,x_{t}\mid s,x_{s})
	&=\theta\log(x_{t}/x_{s})+\frac{1}{2}\log(x_{t}x_{s})-x_{t}^{2}+\left(\theta+\frac{1}{2}\right)(t-s)\notag\\
	&-\log\mathrm{sinh}(t-s)
	-\frac{x_{t}^{2}+x_{s}^{2}}{\exp\{2(t-s)\}-1}+\log I_{\theta-1/2}\left\lbrace\frac{x_{t}x_{s}}{\mathrm{sinh}(t-s)}\right\rbrace,
\end{align*}
for $0\leq s<t\leq T$, where $I_{\nu}$ denotes the modified Bessel function of order $\nu$. 
The logarithmic gradient term in the Doob's $h$-transform of \eqref{eqn:doob_SDE} is 
\begin{align*}
    \nabla \log h(t,x_t) 
	&=-\frac{\theta}{x_{t}}+\frac{1}{2x_{t}}-\frac{2x_{t}}{\exp\{2(T-t)\}-1}\notag\\
	&+I_{\theta-1/2}\left\lbrace\frac{x_{T}x_{t}}{\mathrm{sinh}(T-t)}\right\rbrace^{-1}J_{\theta-1/2}\left\lbrace\frac{x_{T}x_{t}}{\mathrm{sinh}(T-t)}\right\rbrace\frac{x_{T}}{\mathrm{sinh}(T-t)}, 
\end{align*}
and the score of the transition density in \eqref{eqn:reversal_bridge} is
\begin{align*}
    s(t,x_t)
	=\frac{\theta}{x_{t}}+\frac{1}{2x_{t}}-2x_{t}-\frac{2x_{t}}{\exp(2t)-1}+I_{\theta-1/2}\left\lbrace\frac{x_{t}x_{0}}{\mathrm{sinh}(t)}\right\rbrace^{-1}J_{\theta-1/2}\left\lbrace\frac{x_{t}x_{0}}{\mathrm{sinh}(t)}\right\rbrace\frac{x_{0}}{\mathrm{sinh}(t)},
\end{align*}
where $J_{\nu}$ denotes the derivative of $I_{\nu}$. 

Numerical experiments for all $T\in\{1,2,4,8\}$ employed a time-discretization stepsize of $\delta t=0.02$, $1000$ optimization iterations, $1000$ sample paths
per iteration with $10$ unique initial conditions $X_0=x_0$ sampled from the gamma distribution with shape $5$ and rate $2$. 
We select the auxiliary process of \citet{schauer2017guided} as an Ornstein--Uhlenbeck process with drift $\tilde{f}(t,x_t)=2\sqrt{\theta}-2 x_t$ and unit diffusion coefficient. The choice of $\tilde{f}$ is based on the first-order Taylor approximation
\begin{align*}
f(t,x_t)\approx f(t, x^{\star}) + \partial_x f(t, x^{\star}) (x_t - x^{\star}) = \tilde{f}(t,x_t).
\end{align*}

\subsection{Cell model}
Our numerical experiments for all $T\in\{2,4,8,16\}$ and $\sigma_X^2\in\{0.1,1\}$ employed a time-discretization stepsize of $\delta t=0.02$, $2000$ optimization iterations, and $100$ sample paths per iteration. 
In our implementation of the guided proposal bridge, we choose an auxiliary Ornstein--Uhlenbeck process with drift function $\tilde{f}(t,x_t)=A(\theta-x_t)$ with $\theta\in\mathbb{R}^d$ and $A=UDU^\T\in\mathbb{R}^{d\times d}$ parameterized by a matrix $U\in\mathbb{R}^{d\times d}$ whose columns are eigenvectors and a diagonal matrix $D\in\mathbb{R}^{d\times d}$ whose diagonal entries are eigenvalues. This eigendecomposition facilities computation of matrix exponentials appearing in the expressions of the transition density $\tilde{p}(t,x_t\mid 0,x_0)$ and the gradient $\nabla\log \tilde{h}(t,x_t)$. 

Following \citet{schauer2017guided}, we optimize the parameters $\phi=(\theta, U, D)$ by minimizing the Kullback--Leibler divergence 
$\textsc{kl}(\mathbb{P}^{x_0,x_T}|\mathbb{Q}_{\phi}^{x_0,x_T})$, where $\mathbb{P}^{x_0,x_T}$ denotes the law of the diffusion bridge and $\mathbb{Q}_{\phi}^{x_0,x_T}$ is the law induced by the guided proposal bridge process in \eqref{eqn:propsal_bridge}. 
As considered in \citet{schauer2017guided}, we employ importance sampling to approximate the intractable objective by rewriting it as 
\begin{align}\label{eqn:guided_KL_objective}
    \textsc{kl}(\mathbb{P}^{x_0,x_T}|\mathbb{Q}_{\phi}^{x_0,x_T}) = 
    E_{\varphi}^{x_0,x_T}\{\log (\rmd \mathbb{P}^{x_0,x_T}/\rmd \mathbb{Q}_{\phi}^{x_0,x_T})(\rmd \mathbb{P}^{x_0,x_T}/\rmd \mathbb{Q}_{\varphi}^{x_0,x_T})(X^\circ)\},
\end{align}
where $E_{\varphi}^{x_0,x_T}$ denotes expectation with respect to the proposal law $\mathbb{Q}_{\varphi}^{x_0,x_T}$ with reference parameters $\varphi$ that are obtained from earlier iterations of a stochastic gradient descent algorithm. 

Initialization of $\varphi$ is crucial as estimators of the gradient of \eqref{eqn:guided_KL_objective} with respect to $\phi$ will have large variance when the importance sampling approximation under $\mathbb{Q}_{\varphi}^{x_0,x_T}$ is poor. For this application, we initialize by setting $(\theta, U, D)=(X_T, I_d, I_d)$, corresponding to starting $A$ with the identity matrix and choosing $\theta$ to induce mean-reversion towards the stable stationary point $X_T$. 
After each gradient update, we also perform projection to ensure that columns of $U$ are orthogonal.

\end{document}